\begin{document}
\singlespacing

\newcommand{\RR}{\mathbb{R}}
\newcommand{\QQ}{\mathbb{Q}}
\newcommand{\ZZ}{\mathbb{Z}}
\newcommand{\CC}{\mathbb{C}}
\newcommand{\NN}{\mathbb{N}}
\newcommand{\TT}{\mathbb{T}}
\newcommand{\PP}{\mathbb{P}}

\newcommand{\AAA}{\mathcal{A}}
\newcommand{\BBB}{\mathcal{B}}
\newcommand{\CCC}{\mathcal{C}}
\newcommand{\DDD}{\mathcal{D}}
\newcommand{\III}{\mathcal{I}}
\newcommand{\MMM}{\mathcal{M}}
\newcommand{\NNN}{\mathcal{N}}
\newcommand{\OOO}{\mathcal{O}}
\newcommand{\PPP}{\mathcal{P}}
\newcommand{\SSS}{\mathcal{S}}
\newcommand{\TTT}{\mathcal{T}}

\newtheorem{theorem}{Theorem}[section]
\newtheorem{lemma}[theorem]{Lemma}
\newtheorem{proposition}[theorem]{Proposition}
\newtheorem{corollary}[theorem]{Corollary}
\newtheorem{properties}[theorem]{Properties}
\newtheorem{example}[theorem]{Example}
\newtheorem{definition}[theorem]{Definition}

\newenvironment{proof}[1][Proof]{\begin{trivlist}
\item[\hskip \labelsep {\bfseries #1}]}{\end{trivlist}}
\newenvironment{remark}[1][Remark]{\begin{trivlist}
\item[\hskip \labelsep {\bfseries #1}]}{\end{trivlist}}
\newenvironment{subclaim}[1][Subclaim]{\begin{trivlist}
\item[\hskip \labelsep {\bfseries #1}]}{\end{trivlist}}
\newenvironment{case}[1][Case]{\begin{trivlist}
\item[\hskip \labelsep {\bfseries #1}]}{\end{trivlist}}

\begin{center}
\begin{Large}
\sc{Approximating the Chromatic Polynomial}\\
\end{Large}
Yvonne Kemper\footnote{National Institute of Standards and Technology, Gaithersburg, MD; \texttt{yvonnekemper@gmail.com}; \texttt{isabel.beichl@nist.gov}} and Isabel Beichl$^1$
\end{center}

\begin{abstract}
Chromatic polynomials are important objects in graph theory and statistical physics, but as a result of computational difficulties, their study is limited to graphs that are small, highly structured, or very sparse.  We have devised and implemented two algorithms that approximate the coefficients of the chromatic polynomial $P(G,x)$,  where $P(G,k)$ is the number of proper $k$-colorings of a graph $G$ for $k\in\NN$.  Our algorithm is based on a method of Knuth that estimates the order of a search tree.  We compare our results to the true chromatic polynomial in several known cases, and compare our error with previous approximation algorithms.
\end{abstract}

\section{Introduction}
The chromatic polynomial $P(G,x)$ of a graph $G$ has received much attention as a result of the now-resolved four-color problem, but its relevance extends beyond combinatorics, and its domain beyond the natural numbers.  To start, the chromatic polynomial is related to the Tutte polynomial, and evaluating these polynomials at different points provides information about graph invariants and structure.  In addition, $P(G,x)$ is central in applications such as scheduling problems \cite{Scheduling} and the $q$-state Potts model in statistical physics \cite{FortuinKasteleynPotts,SokalPotts}.  The former occur in a variety of contexts, from algorithm design to factory procedures.  For the latter, the relationship between $P(G,x)$ and the Potts model connects statistical mechanics and graph theory, allowing researchers to study phenomena such as the behavior of ferromagnets.

Unfortunately, computing $P(G,x)$ for a general graph $G$ is known to be $\#P$-hard \cite{JaegerVertiganWelshComplexity,OxleyWelshComplexity} and deciding whether or not a graph is $k$-colorable is $NP$-hard \cite{GareyJohnsonComplexity}.  Polynomial-time algorithms have been found for certain subclasses of graphs, including chordal graphs \cite{NaorSchaffer} and graphs of bounded clique-width \cite{GimenezHlinenyNoy,MakowskyRoticsAverbouch}, and recent advances have made it feasible to study $P(G,x)$ for graphs of up to thirty vertices \cite{TimmeChrom,BusselChrom}.  Still, the best known algorithm for computing $P(G,x)$ for an arbitrary graph $G$ of order $n$ has complexity $O(2^nn^{O(1)})$ \cite{BHKChromPoly} and the best current implementation is limited to $2|E(G)|+|V(G)|<950$ and $|V(G)|<65$ \cite{HaggardMathies2,HaggardMathies1}.

Approximation methods have received less attention, though they can provide significant information, see Section \ref{sec:Potts}.  The only previous approach -- developed by Lin \cite{LinApproximate} -- is based on a theorem of Bondy \cite{BondyBound} that gives upper and lower bounds for $P(G,x)$.  Lin's algorithm is a greedy method that uses a particular ordering on the vertices to derive upper and lower polynomial bounds; the final result is a mean of these two polynomials.  While this algorithm has complexity $O(n^2\log(n) + nm^2)$, it gives a single fixed estimate, the accuracy of which cannot be improved by further computation.

In the following, we take a different approach and adapt a Monte Carlo method of sampling used by Beichl, Cloteaux, Sullivan, and others (e.g. \cite{BeichlCloteaux}) to be the basis of two approximation algorithms.  We have computed approximations of the coefficients of $P(G,x)$ for Erd\H{o}s-R\'enyi (ER) random graphs with up to $500$ vertices and $\rho = 0.5$ (larger graphs are possible), and report evaluation times.  Though ER graphs are known to have certain structural properties and invariants \cite{JansonERRG}, they are frequently used as test cases and first-order approximations of unknown networks, and the ability to approximate their polynomials is both demonstrative and useful.  To evaluate the accuracy of our algorithm in certain cases, we compute approximations of $P(G,x)$ for graphs with known chromatic polynomial formulas, as well as for a variety of random graphs small enough to compute precisely.  We compare the relative error in evaluation for Lin's and our algorithm, and the error in individual coefficients for our algorithm. 

In Section \ref{sec:Background}, we give the relevant graph theory background and briefly discuss the connections with the Potts model and other applications.  The Monte Carlo method we modify is based on an idea of Knuth that is detailed in Section \ref{sec:KnuthBacktrack}, and we present our algorithms in Sections \ref{sec:OrigAlg} and \ref{sec:FFAlg}.  An overview of our results appears in Section \ref{sec:Results}, including discussions of complexity and variance.  We conclude in Section \ref{sec:FutureWork}.

\section{Background}\label{sec:Background}
We recall just the relevant definitions; further background on graphs and the chromatic polynomial may be found in \cite{DiestelGraphTheory,WhiteBook}.  In this paper, a graph is given as $G=(V(G),E(G))$, where $V(G)$ is a collection of vertices and $|V(G)|$ is the \emph{order} of $G$, and where $E$ is a collection of edges between two vertices and $|E(G)|$ is the \emph{size} of $G$.  Two vertices $v_i$ and $v_j$ are \emph{adjacent} if connected by an edge; that is, if $v_iv_j\in E(G)$.  A subset of vertices $W$ is \emph{independent} if no vertices in $W$ are adjacent.  A \emph{circuit} is a closed path with no repeated edges or vertices (except the first and last vertex).  A \emph{proper $k$-coloring} of a graph is a labeling of its vertices with at most $k$ colors such that adjacent vertices receive different colors, and the smallest number of colors necessary to properly color a graph is the \emph{chromatic number} of $G$, $\chi(G)$.  Finally, the \emph{chromatic polynomial} of a graph $G$ is the polynomial $P(G,x)$ such that $P(G,k)$ is the number of proper colorings of $G$ using at most $k$ colors.  It is well-known that the degree of $P(G,x)$ is $n=|V(G)|$ and that the coefficients alternate in sign.  As an example, we consider the kite graph.\\

\begin{minipage}[t]{0.7\textwidth}
\vspace{-2cm}
\begin{example}\label{ex:KiteGraph} Let $K$ be the kite graph.  We draw it here with a proper $3$-coloring.  The chromatic polynomial of $K$ is $P(K,x) = x^4 - 5x^3 + 8x^2 - 4x = x(x-1)(x-2)^2$.
\end{example}
\end{minipage}
\begin{minipage}[t]{0.25\textwidth}
\begin{center}
\begin{tikzpicture}[scale=.75]
\draw[thick] (-1,1) -- (1,1) -- (1,-1) -- (-1,-1) -- (-1,1);
\draw[thick] (1,1) -- (-1,-1);

\draw[fill=black] (-1,1) circle[radius=0.1];
\draw[fill=black] (1,1) circle[radius=0.1];
\draw[fill=black] (1,-1) circle[radius=0.1];
\draw[fill=black] (-1,-1) circle[radius=0.1];

\draw (-1.1,1.3) node {{\tiny $v_2$:Red}};
\draw (1.1,1.3) node {{\tiny$v_1$:Green}};
\draw (1.1,-1.3) node {{\tiny$v_4$:Red}};
\draw (-1.1,-1.3) node {{\tiny$v_3$:Blue}};
\end{tikzpicture}
\end{center}
\end{minipage}

\subsection{The Potts Model and Other Motivation}\label{sec:Potts}
As mentioned above, the chromatic polynomial has recently grown in interest as a result of its connection to the Potts model.  To see this, we expand $P(G,q)$, where $q\in\NN$, as the sum over products of Kronecker deltas $\delta_{\sigma_{v_i}\sigma_{v_j}}$:
\begin{equation}\label{PottsExp}
P(G,q) = \sum_{\sigma_{v_n}=1}^q\cdots\sum_{\sigma_{v_1}=1}^q \left(\prod_{v_iv_i\in E(G)}\left(1-\delta_{\sigma_{v_i}\sigma_{v_j}}\right)\right),
\end{equation}
where $|V(G)|=n$, and $\mathbf{\sigma}=(\sigma_{v_1},\ldots,\sigma_{v_n})$ is a coloring of the vertices.  If $v_iv_j\in E(G)$, and $\sigma_{v_i}=\sigma_{v_j}$, then $(1-\delta_{\sigma_{v_i}\sigma_{v_j}})=0$ indicating an improper coloring -- this assignment of $\sigma_{v_i}$'s is thus not included in the sum.  In this manner, we may also interpret $\mathbf{\sigma}$ as a `global microscopic state of an anti-ferromagnetic Potts model with the individual $\sigma_{v_i}$'s being local states or spin values' \cite{BusselChrom}.  Thus, (\ref{PottsExp}) can be used to count energy minimizing global states.  In practice, scientists use solutions of graphs  of small order to predict the behavior of larger graphs.  Our method could provide a quick check for the accuracy of these extrapolations, and allows scientists to make better predictions for still larger numbers of vertices.

As mentioned above, coloring problems correspond directly to situations in which jobs must be scheduled, but cannot happen simultaneously.  In this case, using an approximation of $P(G,x)$ it is possible to estimate the number of possible solutions given specific parameters.  An approximation further gives an estimate of the chromatic number of a graph, useful as a lower bound for the graph bandwidth problem \cite{GBandwidth}.  In particular, when plotting an approximation $P(G,x)$, the integer $k$ at which it increases rapidly serves as an indication of when $G$ becomes colorable.  Approximating the coefficients additionally provides information about the broken circuit complex of a graph, an object fundamental in understanding shellability and homology of geometric lattices, matroids, linear hyperplane arrangements, etc. (See \cite{WhiteBook} for details on matroids and broken circuits complexes.)  It may also be possible to study the structure of these complexes using the approximations generated by the algorithms.

\subsection{Knuth's Algorithm}\label{sec:KnuthBacktrack}

Knuth's algorithm \cite{KnuthBacktrack} is a way to approximate the run-time of a large backtrack program.  Specifically, it estimates the number of nodes $C$ of the search tree $ST$ without actually visiting all of them, and in some cases, it finishes very quickly.  This estimation is accomplished by selecting a path and noting the number of children $n_k$ at each stage in the path (we set $n_0=1$, for the root node).  Each path begins at the root and ends at a leaf, and at each node on the path a child to follow is selected uniformly at random.  The estimate of $C$ is then
\begin{equation}\label{eq:KnuthC}
C\approx n_0+n_0n_1+n_0n_1n_2+\cdots + n_0n_1\cdots n_l = n_0(1 + n_1(1+n_2(1+\cdots))),
\end{equation}

\noindent where $l$ is the length of the path from root to leaf.  The idea that this is a reasonable estimate is based on the likelihood that a node selected at random at level $k$ is similar to other nodes at level $k$.  Moreover, the expected value of the sum in (\ref{eq:KnuthC}) is the true value of $C$.  To see this, first we express
\[C = \sum_{T\in ST(G)} 1.\]
\noindent Then, of all the nodes in level $k$, the probability of picking a particular node $T_k$ is
\[p(T_k) = \frac{1}{n_0n_1\cdots n_{k-1}n_{k}},\]
where $n_i$ is as in (\ref{eq:KnuthC}).  Each time we include $T_k$ in the path, we include the value $n_0n_1\cdots n_k = 1/p(T_k)$ as part of our approximation, and by linearity of expectation:
\begin{eqnarray*}
E\left(\sum_{k=0}^{n-1}n_0n_1\cdots n_{k-1}n_{k} \right)& = & E\left(\sum_{k=0}^{n-1}\frac{1}{p(T_k)}\right)\\
 & = & \sum_{k=0}^{n-1}E\left(\frac{1}{p(T_k)}\right)\\
 & = & \sum_{k=0}^{n-1}\left(\sum_{T_k} \frac{1}{p(T_k)}p(T_k)\right)\\
 & = & \sum_{T\in ST(G)} 1\\
 & = & C.
\end{eqnarray*}

\noindent The average over sufficiently many samples is thus a good approximation for the true value of $C$.  We estimate coefficients of $P(G,x)$ by computing estimates of the number of nodes on each level, a slight modification.

\section{The Broken Circuit Algorithm to Approximate $P(G,x)$}\label{sec:OrigAlg}

In what follows, we assume all graphs are simple (removing parallel edges does not change $P(G,x)$, and graphs with loops have no proper colorings, i.e. $P(G,x)=0$) and connected (the chromatic polynomial of a graph with multiple components is equal to the product of the chromatic polynomials of its components).  Let a graph $G=(V(G),E(G))$ have $|V(G)|=n$ vertices and $|E(G)|=m$ edges.  For our algorithm, we make use of a particular combinatorial interpretation of the coefficients of $P(G,x)$.  Before stating this interpretation, we recall the notion of a broken circuit.

\begin{definition} Given a graph $G$ with a total linear ordering $\omega$ on the edges $E(G)$, a \emph{broken circuit} (BC) is a circuit $C\subseteq G$ with one edge $e$ removed such that $e\leq e'$ for all $e'\in C$. 
\end{definition}

\noindent More on broken circuits can be found in \cite[Ch. 7]{WhiteBook}.  A classic result \cite{WhitneyChromatic} proves
\begin{equation}\label{eq:ChromPoly}
P(G,x)=\sum_{i=0}^{n-1} (-1)^i b_ix^{n-i},
\end{equation}
where $b_i = \#\{H\subseteq G : |E(H)|=i$ and $H$ contains no broken circuits$\}$.  

Notice that the linear ordering on $E(G)$ is not specified.  Amazingly, \emph{any} ordering will work, as long as we are consistent \cite[Thrm 7.3.7.]{WhiteBook}, though certain orderings will be advantageous in our approximation, see Section \ref{sec:EdgeOrder}.  To illustrate Whitney's theorem, consider again the kite graph $K$ from Example \ref{ex:KiteGraph}.  For the sake of brevity, we will refer to the edges as $e_1=v_1v_3$, $e_2=v_1v_2$, $e_3=v_1v_4$, $e_4=v_2v_3$, and $e_5=e_3e_4$.  Let $e_1$$<$$e_2$$<$$e_3$$<$$e_4$$<$$e_5$.  There are three circuits: $\{e_1,e_2,e_4\}$, $\{e_1,e_3,e_5\}$, and $\{e_2,e_3,e_4,e_5\}$, and so we have three broken circuits: $\{e_2,e_4\}$, $\{e_3,e_5\}$, and $\{e_3,e_4,e_5\}$.  Pictorially, we forbid the subgraphs shown in Figure \ref{fig:BCKite}.  The possible ``no broken circuit'' (NBC) subgraphs are shown in Figure \ref{arr:NBCSub}, and we see that the (absolute value of) coefficient of $x^{4-i}$ in $P(K,x)=x^4-5x^3+8x^2-4x$ coincides with the number of possible NBC subgraphs of size $i$.

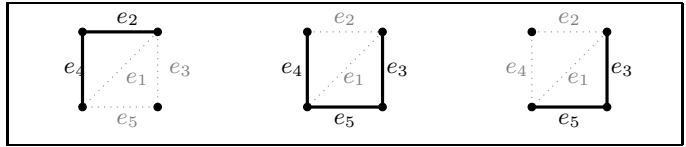
\begin{figure}
\[\begin{array}{|ccccccccc|}
\hline
%&&&&&&&& \\
& \begin{tikzpicture}[scale=.5,font=\small]
\hspace{1mm}
\draw[color=gray, dotted] (-1,1) -- (1,1) -- (1,-1) -- (-1,-1) -- (-1,1);
\draw[color=gray, dotted] (1,1) -- (-1,-1);

\draw[color=black, very thick] (-1,-1) -- (-1,1) -- (1,1);

\draw[fill=black] (-1,1) circle[radius=0.1];
\draw[fill=black] (1,1) circle[radius=0.1];
\draw[fill=black] (1,-1) circle[radius=0.1];
\draw[fill=black] (-1,-1) circle[radius=0.1];

%\draw (-1.25,1.25) node {$v_2$};
%\draw (1.25,1.25) node {$v_1$};
%\draw (1.25,-1.25) node {$v_4$};
%\draw (-1.25,-1.25) node {$v_3$};

\draw (.25,-.25) node {\color{gray}{$e_1$}};
\draw (0,1.35) node {$e_2$};
\draw (1.4,0) node {\color{gray}{$e_3$}};
\draw (-1.4,0) node {$e_4$};
\draw (0,-1.35) node {\color{gray}{$e_5$}};

\end{tikzpicture} & & & \begin{tikzpicture}[scale=.5,font=\small]
\draw[color=gray, dotted] (-1,1) -- (1,1) -- (1,-1) -- (-1,-1) -- (-1,1);
\draw[color=gray, dotted] (1,1) -- (-1,-1);

\draw[color=black,very thick] (-1,1) -- (-1,-1) -- (1,-1) -- (1,1);

\draw[fill=black] (-1,1) circle[radius=0.1];
\draw[fill=black] (1,1) circle[radius=0.1];
\draw[fill=black] (1,-1) circle[radius=0.1];
\draw[fill=black] (-1,-1) circle[radius=0.1];

\draw (.25,-.25) node {\color{gray}{$e_1$}};
\draw (0,1.35) node {\color{gray}{$e_2$}};
\draw (1.4,0) node {$e_3$};
\draw (-1.4,0) node {$e_4$};
\draw (0,-1.35) node {$e_5$};
\end{tikzpicture} & & & \begin{tikzpicture}[scale=.5,font=\small]
\draw[color=gray, dotted] (-1,1) -- (1,1) -- (1,-1) -- (-1,-1) -- (-1,1);
\draw[color=gray, dotted] (1,1) -- (-1,-1);

\draw[color=black,very thick] (-1,-1) -- (1,-1) -- (1,1);

\draw[fill=black] (-1,1) circle[radius=0.1];
\draw[fill=black] (1,1) circle[radius=0.1];
\draw[fill=black] (1,-1) circle[radius=0.1];
\draw[fill=black] (-1,-1) circle[radius=0.1];

\draw (.25,-.25) node {\color{gray}{$e_1$}};
\draw (0,1.35) node {\color{gray}{$e_2$}};
\draw (1.4,0) node {$e_3$};
\draw (-1.4,0) node {\color{gray}{$e_4$}};
\draw (0,-1.35) node {$e_5$};

\end{tikzpicture} & \\
%&&&&&&&& \\
\hline
\end{array}\]
\caption{\footnotesize The broken circuits of the kite graph with edge ordering $e_1$$<$$e_2$$<$$e_3$$<$$e_4$$<$$e_5$.}
\label{fig:BCKite}
\end{figure}

\begin{figure}
\centering
\begin{tabular}{|c|l|}
\hline
\small{\textbf{Edges}} & \small{\textbf{NBC Subgraphs of the Kite Graph}}\\
\hline
0 & \begin{tikzpicture}[scale=.5,font=\small]
\draw[color=gray, dotted] (-1,1) -- (1,1) -- (1,-1) -- (-1,-1) -- (-1,1);
\draw[color=gray, dotted] (1,1) -- (-1,-1);

\draw[fill=black] (-1,1) circle[radius=0.1];
\draw[fill=black] (1,1) circle[radius=0.1];
\draw[fill=black] (1,-1) circle[radius=0.1];
\draw[fill=black] (-1,-1) circle[radius=0.1];

\draw (.25,-.25) node {\color{gray}{$e_1$}};
\draw (0,1.35) node {\color{gray}{$e_2$}};
\draw (1.4,0) node {\color{gray}{$e_3$}};
\draw (-1.4,0) node {\color{gray}{$e_4$}};
\draw (0,-1.35) node {\color{gray}{$e_5$}};

\end{tikzpicture}\\
\hline
1 & \begin{tikzpicture}[scale=.5,font=\small]
\draw[color=gray, dotted] (-1,1) -- (1,1) -- (1,-1) -- (-1,-1) -- (-1,1);
\draw[color=gray, dotted] (1,1) -- (-1,-1);
\draw[thick] (-1,1) -- (1,1);

\draw[fill=black] (-1,1) circle[radius=0.1];
\draw[fill=black] (1,1) circle[radius=0.1];
\draw[fill=black] (1,-1) circle[radius=0.1];
\draw[fill=black] (-1,-1) circle[radius=0.1];

\draw (.25,-.25) node {\color{gray}{$e_1$}};
\draw (0,1.35) node {$e_2$};
\draw (1.4,0) node {\color{gray}{$e_3$}};
\draw (-1.4,0) node {\color{gray}{$e_4$}};
\draw (0,-1.35) node {\color{gray}{$e_5$}};

\end{tikzpicture}\hspace{5mm}\begin{tikzpicture}[scale=.5,font=\small]
\draw[color=gray, dotted] (-1,1) -- (1,1) -- (1,-1) -- (-1,-1) -- (-1,1);
\draw[color=gray, dotted] (1,1) -- (-1,-1);
\draw[thick] (1,1) -- (1,-1);

\draw[fill=black] (-1,1) circle[radius=0.1];
\draw[fill=black] (1,1) circle[radius=0.1];
\draw[fill=black] (1,-1) circle[radius=0.1];
\draw[fill=black] (-1,-1) circle[radius=0.1];

\draw (.25,-.25) node {\color{gray}{$e_1$}};
\draw (0,1.35) node {\color{gray}{$e_2$}};
\draw (1.4,0) node {$e_3$};
\draw (-1.4,0) node {\color{gray}{$e_4$}};
\draw (0,-1.35) node {\color{gray}{$e_5$}};

\end{tikzpicture}\hspace{5mm}\begin{tikzpicture}[scale=.5,font=\small]
\draw[color=gray, dotted] (-1,1) -- (1,1) -- (1,-1) -- (-1,-1) -- (-1,1);
\draw[color=gray, dotted] (1,1) -- (-1,-1);
\draw[thick] (1,-1) -- (-1,-1);

\draw[fill=black] (-1,1) circle[radius=0.1];
\draw[fill=black] (1,1) circle[radius=0.1];
\draw[fill=black] (1,-1) circle[radius=0.1];
\draw[fill=black] (-1,-1) circle[radius=0.1];

\draw (.25,-.25) node {\color{gray}{$e_1$}};
\draw (0,1.35) node {\color{gray}{$e_2$}};
\draw (1.4,0) node {\color{gray}{$e_3$}};
\draw (-1.4,0) node {\color{gray}{$e_4$}};
\draw (0,-1.35) node {$e_5$};

\end{tikzpicture}\hspace{5mm} \begin{tikzpicture}[scale=.5,font=\small]
\draw[color=gray, dotted] (-1,1) -- (1,1) -- (1,-1) -- (-1,-1) -- (-1,1);
\draw[color=gray, dotted] (1,1) -- (-1,-1);
\draw[thick] (-1,-1) -- (-1,1);

\draw[fill=black] (-1,1) circle[radius=0.1];
\draw[fill=black] (1,1) circle[radius=0.1];
\draw[fill=black] (1,-1) circle[radius=0.1];
\draw[fill=black] (-1,-1) circle[radius=0.1];

\draw (.25,-.25) node {\color{gray}{$e_1$}};
\draw (0,1.35) node {\color{gray}{$e_2$}};
\draw (1.4,0) node {\color{gray}{$e_3$}};
\draw (-1.4,0) node {$e_4$};
\draw (0,-1.35) node {\color{gray}{$e_5$}};

\end{tikzpicture}\\
 & \begin{tikzpicture}[scale=.5,font=\small]
\draw[color=gray, dotted] (-1,1) -- (1,1) -- (1,-1) -- (-1,-1) -- (-1,1);
\draw[color=gray, dotted] (1,1) -- (-1,-1);
\draw[thick] (1,1) -- (-1,-1);

\draw[fill=black] (-1,1) circle[radius=0.1];
\draw[fill=black] (1,1) circle[radius=0.1];
\draw[fill=black] (1,-1) circle[radius=0.1];
\draw[fill=black] (-1,-1) circle[radius=0.1];

\draw (.25,-.25) node {$e_1$};
\draw (0,1.35) node {\color{gray}{$e_2$}};
\draw (1.4,0) node {\color{gray}{$e_3$}};
\draw (-1.4,0) node {\color{gray}{$e_4$}};
\draw (0,-1.35) node {\color{gray}{$e_5$}};

\end{tikzpicture}\\
\hline
2 & \begin{tikzpicture}[scale=.5,font=\small]
\draw[color=gray, dotted] (-1,1) -- (1,1) -- (1,-1) -- (-1,-1) -- (-1,1);
\draw[color=gray, dotted] (1,1) -- (-1,-1);
\draw[thick] (-1,1) -- (1,1) -- (-1,-1);

\draw[fill=black] (-1,1) circle[radius=0.1];
\draw[fill=black] (1,1) circle[radius=0.1];
\draw[fill=black] (1,-1) circle[radius=0.1];
\draw[fill=black] (-1,-1) circle[radius=0.1];

\draw (.25,-.25) node {$e_1$};
\draw (0,1.35) node {$e_2$};
\draw (1.4,0) node {\color{gray}{$e_3$}};
\draw (-1.4,0) node {\color{gray}{$e_4$}};
\draw (0,-1.35) node {\color{gray}{$e_5$}};

\end{tikzpicture}\hspace{5mm}\begin{tikzpicture}[scale=.5,font=\small]
\draw[color=gray, dotted] (-1,1) -- (1,1) -- (1,-1) -- (-1,-1) -- (-1,1);
\draw[color=gray, dotted] (1,1) -- (-1,-1);
\draw[thick] (-1,1) -- (1,1) -- (1,-1);

\draw[fill=black] (-1,1) circle[radius=0.1];
\draw[fill=black] (1,1) circle[radius=0.1];
\draw[fill=black] (1,-1) circle[radius=0.1];
\draw[fill=black] (-1,-1) circle[radius=0.1];

\draw (.25,-.25) node {\color{gray}{$e_1$}};
\draw (0,1.35) node {$e_2$};
\draw (1.4,0) node {$e_3$};
\draw (-1.4,0) node {\color{gray}{$e_4$}};
\draw (0,-1.35) node {\color{gray}{$e_5$}};

\end{tikzpicture}\hspace{5mm}\begin{tikzpicture}[scale=.5,font=\small]
\draw[color=gray, dotted] (-1,1) -- (1,1) -- (1,-1) -- (-1,-1) -- (-1,1);
\draw[color=gray, dotted] (1,1) -- (-1,-1);
\draw[thick] (1,-1) -- (-1,-1);
\draw[thick] (-1,1) -- (1,1);

\draw[fill=black] (-1,1) circle[radius=0.1];
\draw[fill=black] (1,1) circle[radius=0.1];
\draw[fill=black] (1,-1) circle[radius=0.1];
\draw[fill=black] (-1,-1) circle[radius=0.1];

\draw (.25,-.25) node {\color{gray}{$e_1$}};
\draw (0,1.35) node {$e_2$};
\draw (1.4,0) node {\color{gray}{$e_3$}};
\draw (-1.4,0) node {\color{gray}{$e_4$}};
\draw (0,-1.35) node {$e_5$};

\end{tikzpicture}\hspace{5mm} \begin{tikzpicture}[scale=.5,font=\small]
\draw[color=gray, dotted] (-1,1) -- (1,1) -- (1,-1) -- (-1,-1) -- (-1,1);
\draw[color=gray, dotted] (1,1) -- (-1,-1);
\draw[thick] (1,1) -- (-1,-1) -- (-1,1);

\draw[fill=black] (-1,1) circle[radius=0.1];
\draw[fill=black] (1,1) circle[radius=0.1];
\draw[fill=black] (1,-1) circle[radius=0.1];
\draw[fill=black] (-1,-1) circle[radius=0.1];

\draw (.25,-.25) node {$e_1$};
\draw (0,1.35) node {\color{gray}{$e_2$}};
\draw (1.4,0) node {\color{gray}{$e_3$}};
\draw (-1.4,0) node {$e_4$};
\draw (0,-1.35) node {\color{gray}{$e_5$}};

\end{tikzpicture}\\  & \begin{tikzpicture}[scale=.5,font=\small]
\draw[color=gray, dotted] (-1,1) -- (1,1) -- (1,-1) -- (-1,-1) -- (-1,1);
\draw[color=gray, dotted] (1,1) -- (-1,-1);
\draw[thick] (1,1) -- (1,-1);
\draw[thick] (-1,1) -- (-1,-1);

\draw[fill=black] (-1,1) circle[radius=0.1];
\draw[fill=black] (1,1) circle[radius=0.1];
\draw[fill=black] (1,-1) circle[radius=0.1];
\draw[fill=black] (-1,-1) circle[radius=0.1];

\draw (.25,-.25) node {\color{gray}{$e_1$}};
\draw (0,1.35) node {\color{gray}{$e_2$}};
\draw (1.4,0) node {$e_3$};
\draw (-1.4,0) node {$e_4$};
\draw (0,-1.35) node {\color{gray}{$e_5$}};

\end{tikzpicture}\hspace{5mm} \begin{tikzpicture}[scale=.5,font=\small]
\draw[color=gray, dotted] (-1,1) -- (1,1) -- (1,-1) -- (-1,-1) -- (-1,1);
\draw[color=gray, dotted] (1,1) -- (-1,-1);
\draw[thick] (1,-1) -- (-1,-1) -- (-1,1);

\draw[fill=black] (-1,1) circle[radius=0.1];
\draw[fill=black] (1,1) circle[radius=0.1];
\draw[fill=black] (1,-1) circle[radius=0.1];
\draw[fill=black] (-1,-1) circle[radius=0.1];

\draw (.25,-.25) node {\color{gray}{$e_1$}};
\draw (0,1.35) node {\color{gray}{$e_2$}};
\draw (1.4,0) node {\color{gray}{$e_3$}};
\draw (-1.4,0) node {$e_4$};
\draw (0,-1.35) node {$e_5$};

\end{tikzpicture}\hspace{5mm}\begin{tikzpicture}[scale=.5,font=\small]
\draw[color=gray, dotted] (-1,1) -- (1,1) -- (1,-1) -- (-1,-1) -- (-1,1);
\draw[color=gray, dotted] (1,1) -- (-1,-1);
\draw[thick] (1,-1) -- (1,1) -- (-1,-1);

\draw[fill=black] (-1,1) circle[radius=0.1];
\draw[fill=black] (1,1) circle[radius=0.1];
\draw[fill=black] (1,-1) circle[radius=0.1];
\draw[fill=black] (-1,-1) circle[radius=0.1];

\draw (.25,-.25) node {$e_1$};
\draw (0,1.35) node {\color{gray}{$e_2$}};
\draw (1.4,0) node {$e_3$};
\draw (-1.4,0) node {\color{gray}{$e_4$}};
\draw (0,-1.35) node {\color{gray}{$e_5$}};

\end{tikzpicture}\hspace{5mm}\begin{tikzpicture}[scale=.5,font=\small]
\draw[color=gray, dotted] (-1,1) -- (1,1) -- (1,-1) -- (-1,-1) -- (-1,1);
\draw[color=gray, dotted] (1,1) -- (-1,-1);
\draw[thick] (1,1) -- (-1,-1) -- (1,-1);

\draw[fill=black] (-1,1) circle[radius=0.1];
\draw[fill=black] (1,1) circle[radius=0.1];
\draw[fill=black] (1,-1) circle[radius=0.1];
\draw[fill=black] (-1,-1) circle[radius=0.1];

\draw (.25,-.25) node {$e_1$};
\draw (0,1.35) node {\color{gray}{$e_2$}};
\draw (1.4,0) node {\color{gray}{$e_3$}};
\draw (-1.4,0) node {\color{gray}{$e_4$}};
\draw (0,-1.35) node {$e_5$};

\end{tikzpicture} \\
\hline 
3 & \begin{tikzpicture}[scale=.5,font=\small]
\draw[color=gray, dotted] (-1,1) -- (1,1) -- (1,-1) -- (-1,-1) -- (-1,1);
\draw[color=gray, dotted] (1,1) -- (-1,-1);
\draw[thick] (-1,-1) -- (1,1) -- (-1,1);
\draw[thick] (1,-1) -- (1,1);

\draw[fill=black] (-1,1) circle[radius=0.1];
\draw[fill=black] (1,1) circle[radius=0.1];
\draw[fill=black] (1,-1) circle[radius=0.1];
\draw[fill=black] (-1,-1) circle[radius=0.1];

\draw (.25,-.25) node {$e_1$};
\draw (0,1.35) node {$e_2$};
\draw (1.4,0) node {$e_3$};
\draw (-1.4,0) node {\color{gray}{$e_4$}};
\draw (0,-1.35) node {\color{gray}{$e_5$}};

\end{tikzpicture}\hspace{5mm}\begin{tikzpicture}[scale=.5,font=\small]
\draw[color=gray, dotted] (-1,1) -- (1,1) -- (1,-1) -- (-1,-1) -- (-1,1);
\draw[color=gray, dotted] (1,1) -- (-1,-1);
\draw[thick] (-1,1) -- (1,1) -- (-1,-1) -- (1,-1);

\draw[fill=black] (-1,1) circle[radius=0.1];
\draw[fill=black] (1,1) circle[radius=0.1];
\draw[fill=black] (1,-1) circle[radius=0.1];
\draw[fill=black] (-1,-1) circle[radius=0.1];

\draw (.25,-.25) node {$e_1$};
\draw (0,1.35) node {$e_2$};
\draw (1.4,0) node {\color{gray}{$e_3$}};
\draw (-1.4,0) node {\color{gray}{$e_4$}};
\draw (0,-1.35) node {$e_5$};

\end{tikzpicture}\hspace{5mm}\begin{tikzpicture}[scale=.5,font=\small]
\draw[color=gray, dotted] (-1,1) -- (1,1) -- (1,-1) -- (-1,-1) -- (-1,1);
\draw[color=gray, dotted] (1,1) -- (-1,-1);
\draw[thick] (1,-1) -- (-1,-1) -- (-1,1);
\draw[thick] (1,1) -- (-1,-1);

\draw[fill=black] (-1,1) circle[radius=0.1];
\draw[fill=black] (1,1) circle[radius=0.1];
\draw[fill=black] (1,-1) circle[radius=0.1];
\draw[fill=black] (-1,-1) circle[radius=0.1];

\draw (.25,-.25) node {$e_1$};
\draw (0,1.35) node {\color{gray}{$e_2$}};
\draw (1.4,0) node {\color{gray}{$e_3$}};
\draw (-1.4,0) node {$e_4$};
\draw (0,-1.35) node {$e_5$};

\end{tikzpicture}\hspace{5mm}\begin{tikzpicture}[scale=.5,font=\small]
\draw[color=gray, dotted] (-1,1) -- (1,1) -- (1,-1) -- (-1,-1) -- (-1,1);
\draw[color=gray, dotted] (1,1) -- (-1,-1);
\draw[thick] (1,-1) -- (1,1) -- (-1,-1) -- (-1,1);

\draw[fill=black] (-1,1) circle[radius=0.1];
\draw[fill=black] (1,1) circle[radius=0.1];
\draw[fill=black] (1,-1) circle[radius=0.1];
\draw[fill=black] (-1,-1) circle[radius=0.1];

\draw (.25,-.25) node {$e_1$};
\draw (0,1.35) node {\color{gray}{$e_2$}};
\draw (1.4,0) node {$e_3$};
\draw (-1.4,0) node {$e_4$};
\draw (0,-1.35) node {\color{gray}{$e_5$}};

\end{tikzpicture}\\
\hline
\end{tabular}
\caption{\footnotesize The possible NBC subgraphs (with edge ordering $e_1$$<$$e_2$$<$$e_3$$<$$e_4$$<$$e_5$) of the kite graph, categorized by the number of edges in the subgraphs.  Gray dotted lines indicate an edge that is not included.}
\label{arr:NBCSub}
\end{figure}

To compute an approximation of the coefficients of $P(G,x)$ for a general graph $G$ using Knuth's idea, we must first design an appropriate search tree.  To this end, assign a total linear ordering $\omega$ on $E(G)$, and let $T_G^{BC}(\omega)$ be the tree such that the nodes at level $k$ correspond to the NBC subgraphs with $k$ edges, where $0\leq k\leq n-1$.  Each node at level $k$ is labeled with an NBC subgraph $H$ with $k$ edges, and has as children the NBC subgraphs of size $k+1$ that can be obtained by adding one edge to $H$.  In particular, the root (level zero) has $m$ children, and the nodes at level $n-1$ have no children.  (If $H$ contains a circuit, then it necessarily contains a broken circuit.  The maximal circuit-free subgraphs of a graph are the spanning trees, all of which have size $n-1$.)  Note that each NBC subgraph $H$ labels $|E(H)|!$ different nodes, as we can select the edges in any order.  Putting all of this together, we have that by approximating the number of \emph{unique} nodes at each level, we approximate the coefficients of $P(G,x)$.  Note that the search tree is labeled with respect to the ordering $\omega$.  Though the number of unique nodes on each level is the same for any ordering, the shape of the tree is affected by the linear ordering -- again, see Section \ref{sec:EdgeOrder}.

To avoid constructing the entire search tree, we take a sample by building a single NBC spanning tree using a version of Kruskal's spanning tree algorithm \cite{BeichlUNION}.  To do this, we start with all of the vertices and none of the edges, and one at a time add edges that do not introduce broken circuits stopping when we have a spanning tree.  At each stage $k$, $0\leq k\leq n-1$, we record the number of edges we \emph{could} add (i.e. $n_k$, the number of children), and choose the next edge uniformly at random.  Then, we approximate:
\begin{equation}
b_k \approx \frac{n_0n_1\cdots n_{k-1}n_k}{k!} = \frac{b_{k-1}n_{k}}{k}.
\end{equation}
We initialize with the leading coefficient $b_0=1$.  The basic algorithm is summarized as pseudocode in Algorithm \ref{alg:UniformApprox}.  To illustrate the BC algorithm, we perform one branch traversal for the kite graph in Figure \ref{table:UniformKiteExample}, with the same ordering as before.  For this example, the approximation of $P(K,x)$ is $x^4-5x^3+\frac{15}{2}x^2 - \frac{15}{6}x$.  Theoretically, the algorithm returns three different polynomials for the kite graph: $x^4 - 5x^3 + 10x^2-6.\overline{6}x$, $x^4 - 5x^3 + 7.5x^2-5x$, and $x^4 - 5x^3 + 7.5x^2-2.5x$ with probabilities $1/5$, $4/15$, and $8/15$, respectively.  The expected value is then $E(P(K,x)) = x^4-5x^3+8x^2-4x = P(K,x)$, exactly what we wanted.

\begin{algorithm}[t]
\SetAlgoNoLine
\KwIn{A graph $G = (V(G),E(G))$ as a list of edges identified by their endpoints, in increasing order.}
\KwOut{Approximate coefficients of $P(G,x)$.}
$G_0=V(G)$; $b_0$ = 1\;
\For{$i=1,\ldots,|V(G)|-1$
}{
  Determine $D$, the set of edges in $E(G)\setminus E(G_{i-1})$ that do not introduce broken circuits\;
  Choose an edge $e_i\in D$ uniformly at random\;
  $b_i = \frac{b_{i-1}|D|}{i}$\;
  $G_{i}\leftarrow G_{i-1}\cup\{e_i\}$\;
}
\caption{BC Algorithm to Approximate the Chromatic Polynomial}
\label{alg:UniformApprox}
\end{algorithm}

\begin{figure}
\centering
\begin{tabular}{|m{4.0cm}|m{4.0cm}|}
%\begin{tabular}{|c|c|}
\hline
{\small\textbf{Backtrack Tree}} & {\small\textbf{Algorithm Variables}}\\
\hline
\begin{tikzpicture}[scale=.75,font=\small]
\draw (0,1.75) node {};
\draw[dashed] (0,1.5) -- (0,0);
\draw[dashed] (-1,0) -- (0,1.5) -- (1,0);
\draw[dashed] (-2,0) -- (0,1.5) -- (2,0);

\draw[fill=black] (0,1.5) circle[radius=0.1];
\draw[fill=black] (0,0) circle[radius=0.1];
\draw[fill=black] (-1,0) circle[radius=0.1];
\draw[fill=black] (1,0) circle[radius=0.1];
\draw[fill=black] (-2,0) circle[radius=0.1];
\draw[fill=black] (2,0) circle[radius=0.1];

\draw (-2.25,.25) node {$e_1$};
\draw (-1.25,.25) node {$e_2$};
\draw (.25,.25) node {$e_3$};
\draw (1.25,.25) node {$e_4$};
\draw (2.25,.25) node {$e_5$};
\end{tikzpicture}& \small \begin{tabular}{l}
$D=\{e_1;e_2;e_3;e_4;e_5\}$\\
\end{tabular} \\
\hline 
\begin{tikzpicture}[scale=.75,font=\small]
\draw (0,1.75) node {};
\draw[dashed] (0,1.5) -- (0,0);
\draw[dashed] (0,1.5) -- (1,0);
\draw[dashed] (-2,0) -- (0,1.5) -- (2,0);
\draw[thick] (-1,0) -- (0,1.5);

\draw[fill=black] (0,1.5) circle[radius=0.1];
\draw[fill=black] (0,0) circle[radius=0.1];
\draw[fill=black] (-1,0) circle[radius=0.1];
\draw[fill=black] (1,0) circle[radius=0.1];
\draw[fill=black] (-2,0) circle[radius=0.1];
\draw[fill=black] (2,0) circle[radius=0.1];

\draw (-2.25,.25) node {$e_1$};
\draw (-1.25,.25) node {$e_2$};
\draw (.25,.25) node {$e_3$};
\draw (1.25,.25) node {$e_4$};
\draw (2.25,.25) node {$e_5$};
\end{tikzpicture}& \small \begin{tabular}{l}
$G_1=\{e_2\}$\\
$B = \left(1,\frac{(1\cdot 5)}{1!},*,*\right)$\\
 $=(1,5,*,*)$\\
\end{tabular}\\
\hline
\begin{tikzpicture}[scale=.75,font=\small]
\draw (0,1.75) node {};
\draw[dashed] (0,1.5) -- (0,0);
\draw[dashed] (0,1.5) -- (1,0);
\draw[dashed] (-2,0) -- (0,1.5) -- (2,0);
\draw[thick] (-1,0) -- (0,1.5);
\draw[dashed] (-1,0) -- (-1,-1.5);
\draw[dashed] (-2,-1.5) -- (-1,0) -- (0,-1.5);

\draw[fill=black] (0,1.5) circle[radius=0.1];
\draw[fill=black] (0,0) circle[radius=0.1];
\draw[fill=black] (-1,0) circle[radius=0.1];
\draw[fill=black] (1,0) circle[radius=0.1];
\draw[fill=black] (-2,0) circle[radius=0.1];
\draw[fill=black] (2,0) circle[radius=0.1];
\draw[fill=black] (-2,-1.5) circle[radius=0.1];
\draw[fill=black] (-1,-1.5) circle[radius=0.1];
\draw[fill=black] (0,-1.5) circle[radius=0.1];

\draw (-2.25,.25) node {$e_1$};
\draw (-1.25,.25) node {$e_2$};
\draw (.25,.25) node {$e_3$};
\draw (1.25,.25) node {$e_4$};
\draw (2.25,.25) node {$e_5$};

\draw (-2.25,-1.25) node {$e_1$};
\draw (-.75,-1.25) node {$e_3$};
\draw (.25,-1.25) node {$e_5$};

\end{tikzpicture}& \small\begin{tabular}{l}
$D=\{e_1;e_3;e_5\}$\\
\end{tabular}\\
\hline 
\begin{tikzpicture}[scale=.75,font=\small]
\draw (0,1.75) node {};
\draw[dashed] (0,1.5) -- (0,0);
\draw[dashed] (0,1.5) -- (1,0);
\draw[dashed] (-2,0) -- (0,1.5) -- (2,0);
\draw[thick] (-1,0) -- (0,1.5);
\draw[dashed] (-1,0) -- (-1,-1.5);
\draw[dashed] (-2,-1.5) -- (-1,0);
\draw[thick] (-1,0) -- (0,-1.5);

\draw[fill=black] (0,1.5) circle[radius=0.1];
\draw[fill=black] (0,0) circle[radius=0.1];
\draw[fill=black] (-1,0) circle[radius=0.1];
\draw[fill=black] (1,0) circle[radius=0.1];
\draw[fill=black] (-2,0) circle[radius=0.1];
\draw[fill=black] (2,0) circle[radius=0.1];
\draw[fill=black] (-2,-1.5) circle[radius=0.1];
\draw[fill=black] (-1,-1.5) circle[radius=0.1];
\draw[fill=black] (0,-1.5) circle[radius=0.1];

\draw (-2.25,.25) node {$e_1$};
\draw (-1.25,.25) node {$e_2$};
\draw (.25,.25) node {$e_3$};
\draw (1.25,.25) node {$e_4$};
\draw (2.25,.25) node {$e_5$};

\draw (-2.25,-1.25) node {$e_1$};
\draw (-.75,-1.25) node {$e_3$};
\draw (.25,-1.25) node {$e_5$};
\end{tikzpicture} & \small \begin{tabular}{l}
$G_2=\{e_2;e_5\}$\\
$B=\left(1,5,\frac{(1\cdot 5\cdot 3)}{2!},*\right)$\\
 $=\left(1,5,\frac{15}{2},*\right)$\\
\end{tabular} \\
\hline
\begin{tikzpicture}[scale=.75,font=\small]
\draw (0,1.75) node {};
\draw[dashed] (0,1.5) -- (0,0);
\draw[dashed] (0,1.5) -- (1,0);
\draw[dashed] (-2,0) -- (0,1.5) -- (2,0);
\draw[thick] (-1,0) -- (0,1.5);
\draw[dashed] (-1,0) -- (-1,-1.5);
\draw[dashed] (-2,-1.5) -- (-1,0);
\draw[thick] (-1,0) -- (0,-1.5);
\draw[dashed] (0,-1.5) -- (0,-3);

\draw[fill=black] (0,1.5) circle[radius=0.1];
\draw[fill=black] (0,0) circle[radius=0.1];
\draw[fill=black] (-1,0) circle[radius=0.1];
\draw[fill=black] (1,0) circle[radius=0.1];
\draw[fill=black] (-2,0) circle[radius=0.1];
\draw[fill=black] (2,0) circle[radius=0.1];
\draw[fill=black] (-2,-1.5) circle[radius=0.1];
\draw[fill=black] (-1,-1.5) circle[radius=0.1];
\draw[fill=black] (0,-1.5) circle[radius=0.1];
\draw[fill=black] (0,-3) circle[radius=0.1];

\draw (-2.25,.25) node {$e_1$};
\draw (-1.25,.25) node {$e_2$};
\draw (.25,.25) node {$e_3$};
\draw (1.25,.25) node {$e_4$};
\draw (2.25,.25) node {$e_5$};

\draw (-2.25,-1.25) node {$e_1$};
\draw (-.75,-1.25) node {$e_3$};
\draw (.25,-1.25) node {$e_5$};
\draw (.25,-2.75) node {$e_1$};

\end{tikzpicture}& \small\begin{tabular}{l}
$D=\{e_1\}$\\
\end{tabular}\\
\hline
 \begin{tikzpicture}[scale=.75,font=\small]
\draw (0,1.75) node {};
\draw[dashed] (0,1.5) -- (0,0);
\draw[dashed] (0,1.5) -- (1,0);
\draw[dashed] (-2,0) -- (0,1.5) -- (2,0);
\draw[thick] (-1,0) -- (0,1.5);
\draw[dashed] (-1,0) -- (-1,-1.5);
\draw[dashed] (-2,-1.5) -- (-1,0);
\draw[thick] (-1,0) -- (0,-1.5);
\draw[thick] (0,-1.5) -- (0,-3);

\draw[fill=black] (0,1.5) circle[radius=0.1];
\draw[fill=black] (0,0) circle[radius=0.1];
\draw[fill=black] (-1,0) circle[radius=0.1];
\draw[fill=black] (1,0) circle[radius=0.1];
\draw[fill=black] (-2,0) circle[radius=0.1];
\draw[fill=black] (2,0) circle[radius=0.1];
\draw[fill=black] (-2,-1.5) circle[radius=0.1];
\draw[fill=black] (-1,-1.5) circle[radius=0.1];
\draw[fill=black] (0,-1.5) circle[radius=0.1];
\draw[fill=black] (0,-3) circle[radius=0.1];

\draw (-2.25,.25) node {$e_1$};
\draw (-1.25,.25) node {$e_2$};
\draw (.25,.25) node {$e_3$};
\draw (1.25,.25) node {$e_4$};
\draw (2.25,.25) node {$e_5$};

\draw (-2.25,-1.25) node {$e_1$};
\draw (-.75,-1.25) node {$e_3$};
\draw (.25,-1.25) node {$e_5$};
\draw (.25,-2.75) node {$e_1$};

\end{tikzpicture}& \small\begin{tabular}{l}
$G_3=\{e_2;e_5;e_1\}$\\
$B=\left(1,5,\frac{15}{2},\frac{(1\cdot 5\cdot 3\cdot 1)}{3!}\right)$\\
 $=\left(1,5,\frac{15}{2},\frac{15}{6}\right)$\\
\end{tabular} \\
\hline
\end{tabular}
\caption{\footnotesize One sample of Algorithm \ref{alg:UniformApprox} for the kite graph (with edge ordering $e_1$$<$$e_2$$<$$e_3$$<$$e_4$$<$$e_5$).  $B=(b_0,b_1,b_2,b_3)$ is the coefficient sequence.}
\label{table:UniformKiteExample}
\end{figure}

\subsection{Ordering the Edges}\label{sec:EdgeOrder}
Different edge orderings can significantly affect the magnitude of the variance of the samples as they alter the uniformity of the search tree.  The ideal ordering would result in every node at every level having the same number of children -- this is not usually possible (trees are one exception), but a ``good'' ordering will get as close to this as possible.  A ``bad'' ordering, on the other hand, results in a wide range of numbers of children.  In Example \ref{ex:TreeVar}, we show how two different orderings of the edges of the kite graph affect the structure of $T_G^{BC}$.

\begin{example}\label{ex:TreeVar}
On the left of each box in Figure \ref{fig:EdgeOrderEx}, we have the kite graph with two edge orderings: $\omega_1$, where $e_1<e_2<e_3<e_4<e_5$, and $\omega_2$, where $e_1<e_2<e_5<e_4<e_3$.  On the right, we show the $T_K^{BC}(\omega_i)$ corresponding to the ordering.  The leaves in $T_K^{BC}(\omega_i)$ are labeled `$abc$' (abbreviating `$e_ae_be_c$'), where edge `$e_a$' was added first, then `$e_b$', and finally `$e_c$'.  The ``good'' ordering (when combined with the improvement described in Section \ref{sec:ConeImprove}) is in fact the best we can do for the kite graph.

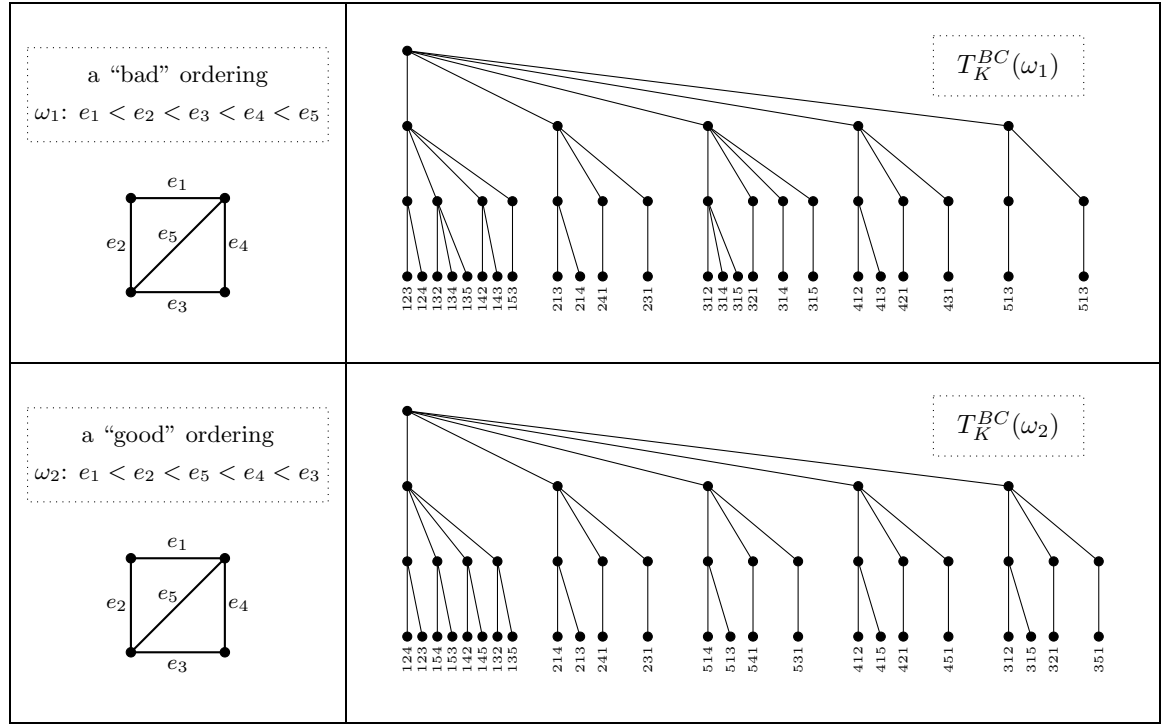
\begin{sidewaysfigure}
\begin{center}
\begin{tabular}{|l|lll|}
\hline
 & & & \\
\begin{tikzpicture}[scale=1.25]
\draw[dotted] (-3.1,2.6) -- (0.1,2.6) -- (0.1,1.6) -- (-3.1,1.6) -- (-3.1,2.6);
\node at (-1.5,2.3) {\small{a ``bad'' ordering}};
\node at (-1.5,1.9) {\small{$\omega_1$: $e_1<e_2<e_3<e_4<e_5$}};
\draw[thick] (-2,0) -- (-2,1) -- (-1,1) -- (-1,0) -- (-2,0) -- (-1,1);

\filldraw[fill=black] (-2,1) circle (0.05);
\filldraw[fill=black] (-2,0) circle (0.05);
\filldraw[fill=black] (-1,1) circle (0.05);
\filldraw[fill=black] (-1,0) circle (0.05);

\draw (-1.5,1.15) node {\footnotesize{$e_1$}};
\draw (-2.15,.5) node {\footnotesize{$e_2$}};
\draw (-1.6,.6) node {\footnotesize{$e_5$}};
\draw (-.85,.5) node {\footnotesize{$e_4$}};
\draw (-1.5,-.15) node {\footnotesize{$e_3$}};
\end{tikzpicture} & & \begin{tikzpicture}[xscale=-1,scale=2]
\begin{scope}[yshift=-1mm]
\draw[dotted] (-3.5,0.2) -- (-4.5,0.2) -- (-4.5,-0.2) -- (-3.5,-0.2) -- (-3.5,0.2);
\node at (-4,0) {$T_K^{BC}(\omega_1)$};
\end{scope}

\draw (0,0) -- (0,-.5) -- (0,-1) -- (0,-1.5);
\draw (0,-1) -- (-0.1,-1.5);
\draw (0,-.5) -- (-0.2,-1) -- (-0.2,-1.5);
\draw (-0.2,-1) -- (-0.3,-1.5);
\draw (-0.2,-1) -- (-0.4,-1.5);
\draw (0,-.5) -- (-0.5,-1) -- (-0.5,-1.5);
\draw (-0.5,-1) -- (-0.6,-1.5);
\draw (0,-.5) -- (-0.7,-1) -- (-0.7,-1.5);

\draw (0,0) -- (-1,-.5) -- (-1,-1) -- (-1,-1.5);
\draw (-1,-1) -- (-1.15,-1.5);
\draw (-1,-.5) -- (-1.3,-1) -- (-1.3,-1.5);
\draw (-1,-.5) -- (-1.6,-1) -- (-1.6,-1.5);

\draw (0,0) -- (-2,-.5) -- (-2,-1) -- (-2,-1.5);
\draw (-2,-1) -- (-2.1,-1.5);
\draw (-2,-1) -- (-2.2,-1.5);
\draw (-2,-.5) -- (-2.3,-1) -- (-2.3,-1.5);
\draw (-2,-.5) -- (-2.5,-1) -- (-2.5,-1.5);
\draw (-2,-.5) -- (-2.7,-1) -- (-2.7,-1.5);

\draw (0,0) -- (-3,-.5) -- (-3,-1) -- (-3,-1.5);
\draw (-3,-1) -- (-3.15,-1.5);
\draw (-3,-.5) -- (-3.3,-1) -- (-3.3,-1.5);
\draw (-3,-.5) -- (-3.6,-1) -- (-3.6,-1.5);

\draw (0,0) -- (-4,-.5) -- (-4,-1) -- (-4,-1.5);
\draw (-4,-.5) -- (-4.5,-1) -- (-4.5,-1.5);

\filldraw[fill=black] (0,0) circle (0.03);
\filldraw[fill=black] (0,-.5) circle (0.03);
\filldraw[fill=black] (-1,-.5) circle (0.03);
\filldraw[fill=black] (-2,-.5) circle (0.03);
\filldraw[fill=black] (-3,-.5) circle (0.03);
\filldraw[fill=black] (-4,-.5) circle (0.03);
\filldraw[fill=black] (0,-1) circle (0.03);
\filldraw[fill=black] (-0.2,-1) circle (0.03);
\filldraw[fill=black] (-0.5,-1) circle (0.03);
\filldraw[fill=black] (-0.7,-1) circle (0.03);
\filldraw[fill=black] (-1,-1) circle (0.03);
\filldraw[fill=black] (-1.3,-1) circle (0.03);
\filldraw[fill=black] (-1.6,-1) circle (0.03);
\filldraw[fill=black] (-2,-1) circle (0.03);
\filldraw[fill=black] (-2.3,-1) circle (0.03);
\filldraw[fill=black] (-2.5,-1) circle (0.03);
\filldraw[fill=black] (-2.7,-1) circle (0.03);
\filldraw[fill=black] (-3,-1) circle (0.03);
\filldraw[fill=black] (-3.3,-1) circle (0.03);
\filldraw[fill=black] (-3.6,-1) circle (0.03);
\filldraw[fill=black] (-4,-1) circle (0.03);
\filldraw[fill=black] (-4.5,-1) circle (0.03);

\filldraw[fill=black] (0,-1.5) circle (0.03);
\node[rotate=90] at (-0,-1.65) {\tiny{$123$}};
\filldraw[fill=black] (-0.1,-1.5) circle (0.03);
\node[rotate=90] at (-0.1,-1.65) {\tiny{$124$}};
\filldraw[fill=black] (-0.2,-1.5) circle (0.03);
\node[rotate=90] at (-0.2,-1.65) {\tiny{$132$}};
\filldraw[fill=black] (-0.3,-1.5) circle (0.03);
\node[rotate=90] at (-0.3,-1.65) {\tiny{$134$}};
\filldraw[fill=black] (-0.4,-1.5) circle (0.03);
\node[rotate=90] at (-0.4,-1.65) {\tiny{$135$}};
\filldraw[fill=black] (-0.5,-1.5) circle (0.03);
\node[rotate=90] at (-0.5,-1.65) {\tiny{$142$}};
\filldraw[fill=black] (-0.6,-1.5) circle (0.03);
\node[rotate=90] at (-0.6,-1.65) {\tiny{$143$}};
\filldraw[fill=black] (-0.7,-1.5) circle (0.03);
\node[rotate=90] at (-0.7,-1.65) {\tiny{$153$}};

\filldraw[fill=black] (-1,-1.5) circle (0.03);
\node[rotate=90] at (-1,-1.65) {\tiny{$213$}};
\filldraw[fill=black] (-1.15,-1.5) circle (0.03);
\node[rotate=90] at (-1.15,-1.65) {\tiny{$214$}};
\filldraw[fill=black] (-1.3,-1.5) circle (0.03);
\node[rotate=90] at (-1.3,-1.65) {\tiny{$241$}};
\filldraw[fill=black] (-1.6,-1.5) circle (0.03);
\node[rotate=90] at (-1.6,-1.65) {\tiny{$231$}};

\filldraw[fill=black] (-2,-1.5) circle (0.03);
\node[rotate=90] at (-2,-1.65) {\tiny{$312$}};
\filldraw[fill=black] (-2.1,-1.5) circle (0.03);
\node[rotate=90] at (-2.1,-1.65) {\tiny{$314$}};
\filldraw[fill=black] (-2.2,-1.5) circle (0.03);
\node[rotate=90] at (-2.2,-1.65) {\tiny{$315$}};
\filldraw[fill=black] (-2.3,-1.5) circle (0.03);
\node[rotate=90] at (-2.3,-1.65) {\tiny{$321$}};
\filldraw[fill=black] (-2.5,-1.5) circle (0.03);
\node[rotate=90] at (-2.5,-1.65) {\tiny{$314$}};
\filldraw[fill=black] (-2.7,-1.5) circle (0.03);
\node[rotate=90] at (-2.7,-1.65) {\tiny{$315$}};

\filldraw[fill=black] (-3,-1.5) circle (0.03);
\node[rotate=90] at (-3,-1.65) {\tiny{$412$}};
\filldraw[fill=black] (-3.15,-1.5) circle (0.03);
\node[rotate=90] at (-3.15,-1.65) {\tiny{$413$}};
\filldraw[fill=black] (-3.3,-1.5) circle (0.03);
\node[rotate=90] at (-3.3,-1.65) {\tiny{$421$}};
\filldraw[fill=black] (-3.6,-1.5) circle (0.03);
\node[rotate=90] at (-3.6,-1.65) {\tiny{$431$}};
\filldraw[fill=black] (-4,-1.5) circle (0.03);
\node[rotate=90] at (-4,-1.65) {\tiny{$513$}};
\filldraw[fill=black] (-4.5,-1.5) circle (0.03);
\node[rotate=90] at (-4.5,-1.65) {\tiny{$513$}};
\end{tikzpicture} & \\
& & & \\
\hline
%\end{tabular}

%\begin{tabular}{|l|lll|}
%\hline
& & & \\
\begin{tikzpicture}[scale=1.25]
\draw[dotted] (-3.1,2.6) -- (0.1,2.6) -- (0.1,1.6) -- (-3.1,1.6) -- (-3.1,2.6);
\node at (-1.5,2.3) {\small{a ``good'' ordering}};
\node at (-1.5,1.9) {\small{$\omega_2$: $e_1<e_2<e_5<e_4<e_3$}};
\draw[thick] (-2,0) -- (-2,1) -- (-1,1) -- (-1,0) -- (-2,0) -- (-1,1);

\filldraw[fill=black] (-2,1) circle (0.05);
\filldraw[fill=black] (-2,0) circle (0.05);
\filldraw[fill=black] (-1,1) circle (0.05);
\filldraw[fill=black] (-1,0) circle (0.05);

\draw (-1.5,1.15) node {\footnotesize{$e_1$}};
\draw (-2.15,.5) node {\footnotesize{$e_2$}};
\draw (-1.6,.6) node {\footnotesize{$e_5$}};
\draw (-.85,.5) node {\footnotesize{$e_4$}};
\draw (-1.5,-.15) node {\footnotesize{$e_3$}};
\end{tikzpicture}
 & & \begin{tikzpicture}[scale=2] 
\begin{scope}[yshift=-1mm]
\draw[dotted] (3.5,0.2) -- (4.5,0.2) -- (4.5,-0.2) -- (3.5,-0.2) -- (3.5,0.2);
\node at (4,0) {$T_K^{BC}(\omega_2)$};
\end{scope}
\draw (0,0) -- (0,-.5) -- (0,-1) -- (0,-1.5);
\draw (0,-1) -- (0.1,-1.5);
\draw (0,-.5) -- (0.2,-1) -- (0.2,-1.5);
\draw (0.2,-1) -- (0.3,-1.5);
\draw (0,-.5) -- (0.4,-1) -- (0.4,-1.5);
\draw (0.4,-1) -- (0.5,-1.5);
\draw (0,-.5) -- (0.6,-1) -- (0.6,-1.5);
\draw (0.6,-1) -- (0.7,-1.5);

\draw (0,0) -- (1,-.5) -- (1,-1) -- (1,-1.5);
\draw (1,-1) -- (1.15,-1.5);
\draw (1,-.5) -- (1.3,-1) -- (1.3,-1.5);
\draw (1,-.5) -- (1.6,-1) -- (1.6,-1.5);

\draw (0,0) -- (2,-.5) -- (2,-1) -- (2,-1.5);
\draw (2,-1) -- (2.15,-1.5);
\draw (2,-.5) -- (2.3,-1) -- (2.3,-1.5);
\draw (2,-.5) -- (2.6,-1) -- (2.6,-1.5);

\draw (0,0) -- (3,-.5) -- (3,-1) -- (3,-1.5);
\draw (3,-1) -- (3.15,-1.5);
\draw (3,-.5) -- (3.3,-1) -- (3.3,-1.5);
\draw (3,-.5) -- (3.6,-1) -- (3.6,-1.5);

\draw (0,0) -- (4,-.5) -- (4,-1) -- (4,-1.5);
\draw (4,-1) -- (4.15,-1.5);
\draw (4,-.5) -- (4.3,-1) -- (4.3,-1.5);
\draw (4,-.5) -- (4.6,-1) -- (4.6,-1.5);

\filldraw[fill=black] (0,0) circle (0.03);
\filldraw[fill=black] (0,-.5) circle (0.03);
\filldraw[fill=black] (1,-.5) circle (0.03);
\filldraw[fill=black] (2,-.5) circle (0.03);
\filldraw[fill=black] (3,-.5) circle (0.03);
\filldraw[fill=black] (4,-.5) circle (0.03);
\filldraw[fill=black] (0,-1) circle (0.03);
\filldraw[fill=black] (0.2,-1) circle (0.03);
\filldraw[fill=black] (0.4,-1) circle (0.03);
\filldraw[fill=black] (0.6,-1) circle (0.03);
\filldraw[fill=black] (1,-1) circle (0.03);
\filldraw[fill=black] (1.3,-1) circle (0.03);
\filldraw[fill=black] (1.6,-1) circle (0.03);
\filldraw[fill=black] (2,-1) circle (0.03);
\filldraw[fill=black] (2.3,-1) circle (0.03);
\filldraw[fill=black] (2.6,-1) circle (0.03);
\filldraw[fill=black] (3,-1) circle (0.03);
\filldraw[fill=black] (3.3,-1) circle (0.03);
\filldraw[fill=black] (3.6,-1) circle (0.03);
\filldraw[fill=black] (4,-1) circle (0.03);
\filldraw[fill=black] (4.3,-1) circle (0.03);
\filldraw[fill=black] (4.6,-1) circle (0.03);

\filldraw[fill=black] (0,-1.5) circle (0.03);
\node[rotate=90] at (0,-1.65) {\tiny{$124$}};
\filldraw[fill=black] (0.1,-1.5) circle (0.03);
\node[rotate=90] at (0.1,-1.65) {\tiny{$123$}};
\filldraw[fill=black] (0.2,-1.5) circle (0.03);
\node[rotate=90] at (0.2,-1.65) {\tiny{$154$}};
\filldraw[fill=black] (0.3,-1.5) circle (0.03);
\node[rotate=90] at (0.3,-1.65) {\tiny{$153$}};
\filldraw[fill=black] (0.4,-1.5) circle (0.03);
\node[rotate=90] at (0.4,-1.65) {\tiny{$142$}};
\filldraw[fill=black] (0.5,-1.5) circle (0.03);
\node[rotate=90] at (0.5,-1.65) {\tiny{$145$}};
\filldraw[fill=black] (0.6,-1.5) circle (0.03);
\node[rotate=90] at (0.6,-1.65) {\tiny{$132$}};
\filldraw[fill=black] (0.7,-1.5) circle (0.03);
\node[rotate=90] at (0.7,-1.65) {\tiny{$135$}};

\filldraw[fill=black] (1,-1.5) circle (0.03);
\node[rotate=90] at (1,-1.65) {\tiny{$214$}};
\filldraw[fill=black] (1.15,-1.5) circle (0.03);
\node[rotate=90] at (1.15,-1.65) {\tiny{$213$}};
\filldraw[fill=black] (1.3,-1.5) circle (0.03);
\node[rotate=90] at (1.3,-1.65) {\tiny{$241$}};
\filldraw[fill=black] (1.6,-1.5) circle (0.03);
\node[rotate=90] at (1.6,-1.65) {\tiny{$231$}};

\filldraw[fill=black] (2,-1.5) circle (0.03);
\node[rotate=90] at (2,-1.65) {\tiny{$514$}};
\filldraw[fill=black] (2.15,-1.5) circle (0.03);
\node[rotate=90] at (2.15,-1.65) {\tiny{$513$}};
\filldraw[fill=black] (2.3,-1.5) circle (0.03);
\node[rotate=90] at (2.3,-1.65) {\tiny{$541$}};
\filldraw[fill=black] (2.6,-1.5) circle (0.03);
\node[rotate=90] at (2.6,-1.65) {\tiny{$531$}};

\filldraw[fill=black] (3,-1.5) circle (0.03);
\node[rotate=90] at (3,-1.65) {\tiny{$412$}};
\filldraw[fill=black] (3.15,-1.5) circle (0.03);
\node[rotate=90] at (3.15,-1.65) {\tiny{$415$}};
\filldraw[fill=black] (3.3,-1.5) circle (0.03);
\node[rotate=90] at (3.3,-1.65) {\tiny{$421$}};
\filldraw[fill=black] (3.6,-1.5) circle (0.03);
\node[rotate=90] at (3.6,-1.65) {\tiny{$451$}};
\filldraw[fill=black] (4,-1.5) circle (0.03);
\node[rotate=90] at (4,-1.65) {\tiny{$312$}};
\filldraw[fill=black] (4.15,-1.5) circle (0.03);
\node[rotate=90] at (4.15,-1.65) {\tiny{$315$}};
\filldraw[fill=black] (4.3,-1.5) circle (0.03);
\node[rotate=90] at (4.3,-1.65) {\tiny{$321$}};
\filldraw[fill=black] (4.6,-1.5) circle (0.03);
\node[rotate=90] at (4.6,-1.65) {\tiny{$351$}};
\end{tikzpicture} & \\
  & & & \\
 \hline
\end{tabular}
\end{center}
\caption{Two different orderings of the kite graph with corresponding search trees.  For clarity, we write the $e_1$ as $1$, $e_2$ as $2$, etc. on the right.}
\label{fig:EdgeOrderEx}
\end{sidewaysfigure}
\end{example}

Experimentally, we found that the ordering that gave the least variance (while not being too computationally expensive) was based on a perfect elimination ordering of the vertices.

\begin{definition} Let $G$ be a graph with $n$ vertices.  A \emph{perfect elimination ordering} of the vertices is an ordering $(v_1,\ldots,v_n)$ such that the neighbors of $v_i$ form a clique in $G_i$, where $G_1=G$, and $G_{i+1}=G_i\setminus\{v_i\}$.
\end{definition}

See \cite{LinApproximate,PEO} for more on perfect elimination orderings and the algorithmic aspects thereof.  Not every graph has a perfect elimination ordering (in fact, only chordal graphs; this is one characterization of this class of graphs), but we approximate such an ordering as follows.  Let $G_1=G$, and $G_{i+1} = G_i\setminus \{v_i\}$.  Then, let
\[v_i = \begin{cases}%left\{\begin{array}{cl}
v\in G_i & \mbox{such that the neighbors of }v\mbox{ form a clique in }G_i\mbox{, or}\\
w\in G_i & \mbox{such that the degree of }w\mbox{ is minimal in }G_i\mbox{ if no such }v\mbox{ exists.}
\end{cases}\]
%\end{array}\]

\noindent Now, let $v_n$ $<$ $v_{n-1}$ $<\cdots <$ $v_2$ $<$ $v_1$.  An edge $v_iv_j$ is smaller than $v_rv_s$ if and only if $v_i<v_r$ or if $v_i=v_r$ and $v_j<v_s$.  The complexity of this ordering is analyzed in Section \ref{sec:BCcomplexity}.  We conjecture that this is the best ordering possible; the intuitive and theoretical bases of this conjecture have their sources in matroid theory and broken circuit complexes.

\subsection{An Improvement to the BC Algorithm}\label{sec:ConeImprove}
A deeper understanding of the structure of the NBC subgraphs allows for an improvement to the BC algorithm.  Given a graph $G$ and a total linear ordering $<$ on its edges, let $e$ be the smallest edge with respect to $<$.  Then, $e$ is in every NBC spanning tree.  (If $e\not\in T$, for some spanning tree $T$ of $G$, the $T\cup \{e\}$ contains a circuit $C$ with $e\in C$.  Then, $C\setminus e\subseteq T$ is a broken circuit.)  Now, let $\AAA$ be the set of NBC subgraphs of $G$ that do \emph{not} contain $e$.  As every NBC subgraph of $G$ is a subgraph of one (or more) of the NBC spanning trees, $e$ can be added to any subgraph $A\in\AAA$, and still have an NBC subgraph.  Let $\AAA_e = \{A\cup\{e\}:A\in\AAA\}$.  Then, $|\AAA| = |\AAA_e|$, and every NBC subgraph is contained in $\AAA\cup\AAA_e$.  Therefore, to compute $P(G,x)$, it is sufficient to know the number of NBC subgraphs of each dimension in $\AAA_{e}$.  In particular, if $a_i$ is the number of NBC subgraphs in $\AAA_e$ with $i+1$ edges, we can write
\begin{eqnarray*}
P(G,x) & = & a_0x^n -(a_0+a_1)x^{n-1} + (a_1+a_2)x^{n-2} + \cdots\\
 & & \cdots + (-1)^{n-2}(a_{n-3}+a_{n-2})x^2 + (-1)^{n-1}(a_{n-2})x
\end{eqnarray*}

The change from the original algorithm is to include $e$ in the NBC tree we are building as an initial step, then perform our usual sampling, where the root of the tree is labeled by the subgraph containing only $e$.  We approximate the number of uniquely labeled nodes on each level in the same manner as before.

We then compute our approximation of the coefficients of $P(G,x)$ using the relationships $b_0=a_0$, $b_{n-1}=a_{n-2}$, and $b_i=a_{i-1}+a_i$ for $1\leq i\leq n-2$.  This decreases the variance for two reasons.  First, we remove the inherent imbalance in the number of children of each node on the second level.  (The node labeled with the subgraph containing just the edge $e$ will have at least as many children as any other NBC subgraph with just one edge.)  Second, we are performing one level of approximation fewer, which reduces the compounding variance.

\section{The Falling Factorial Algorithm to Approximate $P(G,x)$}\label{sec:FFAlg}
Like the BC algorithm, the Falling Factorial (FF) algorithm makes use of a variation of Knuth's method to approximate the coefficients of $P(G,x)$, but is based on a different expansion of the chromatic polynomial.  As before, graphs are simple and connected, with $m$ edges and $n$ vertices.  We can express
\[P(G,x) = \sum_{i=0}^{n-1}p_{n-i}\langle x\rangle_{n-i},\]

\noindent where $p_{t}$ is the number of ways to partition the vertices of $G$ into $t$ independent sets and $\langle x\rangle_{t} = x\cdot(x-1)\cdot(x-2)\cdot\cdots\cdot (x-t+1)$.  See \cite{DavisThesis,TutteFF} for more details of this expansion.  In Figure \ref{fig:P4IndSets}, we show the partitions of $P_4$.  We have arranged the possible partitions as nodes of a tree, where the partition with $n$ independent sets is the root, and the leaves represent partitions with minimal numbers of parts, and each level $k$ is a refinement of its children in level $k+1$.  Notice that unlike the BC tree, maximal paths may be of different lengths.  The leaves of paths of longest length correspond to colorings with the minimum number of colors.  Such a tree exists for any graph $G$, and does not depend on any labeling of $E(G)$ or $V(G)$, so we denote it simply $T_G^{FF}$.  We can approximate the number of nodes on each level using Knuth's method, though in this instance, counting the number of repetitions of nodes becomes more involved.

\begin{figure}
\begin{center}
\begin{tikzpicture}[level/.style={sibling distance=50mm/#1},scale=.95]
\node (z){\begin{tikzpicture}[scale=.4]
\draw[thick] (-1,1) -- (1,1) -- (1,-1) -- (-1,-1);
\draw[fill=black] (-1,1) circle[radius=0.1];
\draw[fill=black] (1,1) circle[radius=0.1];
\draw[fill=black] (1,-1) circle[radius=0.1];
\draw[fill=black] (-1,-1) circle[radius=0.1];
\draw (-1,1) circle[radius=0.4];
\draw (1,1) circle[radius=0.4];
\draw (1,-1) circle[radius=0.4];
\draw (-1,-1) circle[radius=0.4];
\end{tikzpicture}}
  child [level distance = 2.25cm] {node (a) {\begin{tikzpicture}[scale=.4]
\draw[thick] (-1,1) -- (1,1) -- (1,-1) -- (-1,-1);
\draw[fill=black] (-1,1) circle[radius=0.1];
\draw[fill=black] (1,1) circle[radius=0.1];
\draw[fill=black] (1,-1) circle[radius=0.1];
\draw[fill=black] (-1,-1) circle[radius=0.1];
\draw (-1,1) circle[radius=0.4];
\begin{scope}[rotate = 45]
\draw (0,0) ellipse (2 and .5);
\end{scope}
\draw (1,-1) circle[radius=0.4];
\end{tikzpicture}}
    child [level distance = 2.25cm] {node (b) {\begin{tikzpicture}[scale=.4]
\draw[thick] (-1,1) -- (1,1) -- (1,-1) -- (-1,-1);
\draw[fill=black] (-1,1) circle[radius=0.1];
\draw[fill=black] (1,1) circle[radius=0.1];
\draw[fill=black] (1,-1) circle[radius=0.1];
\draw[fill=black] (-1,-1) circle[radius=0.1];
\begin{scope}[rotate = 45]
\draw (0,0) ellipse (2 and .5);
\end{scope}
\begin{scope}[rotate = 135]
\draw (0,0) ellipse (2 and .5);
\end{scope}
\end{tikzpicture}}} 
  }
  child [level distance = 2.25cm] {node (j) {\begin{tikzpicture}[scale=.4]
\draw[thick] (-1,1) -- (1,1) -- (1,-1) -- (-1,-1);
\draw[fill=black] (-1,1) circle[radius=0.1];
\draw[fill=black] (1,1) circle[radius=0.1];
\draw[fill=black] (1,-1) circle[radius=0.1];
\draw[fill=black] (-1,-1) circle[radius=0.1];
\draw (1,1) circle[radius=0.4];
\draw (-1,-1) circle[radius=0.4];
\begin{scope}[rotate = 135]
\draw (0,0) ellipse (2 and .5);
\end{scope}
\end{tikzpicture}}
    child [level distance = 2.25cm] {node (k) {\begin{tikzpicture}[scale=.4]
\draw[thick] (-1,1) -- (1,1) -- (1,-1) -- (-1,-1);
\draw[fill=black] (-1,1) circle[radius=0.1];
\draw[fill=black] (1,1) circle[radius=0.1];
\draw[fill=black] (1,-1) circle[radius=0.1];
\draw[fill=black] (-1,-1) circle[radius=0.1];
\begin{scope}[rotate = 45]
\draw (0,0) ellipse (2 and .5);
\end{scope}
\begin{scope}[rotate = 135]
\draw (0,0) ellipse (2 and .5);
\end{scope}
\end{tikzpicture}}}
}
child [level distance = 2.25cm] {node (l) {\begin{tikzpicture}[scale=.4]
\draw[thick] (-1,1) -- (1,1) -- (1,-1) -- (-1,-1);
\draw[fill=black] (-1,1) circle[radius=0.1];
\draw[fill=black] (1,1) circle[radius=0.1];
\draw[fill=black] (1,-1) circle[radius=0.1];
\draw[fill=black] (-1,-1) circle[radius=0.1];
\draw (1,1) circle[radius=0.4];
\draw (1,-1) circle[radius=0.4];
\draw (-1,0) ellipse (.5 and 1.5);
\end{tikzpicture}}
};
\end{tikzpicture}
\end{center}
\caption{The search tree $T_G^{FF}$; nodes are labeled with partitions of $V(P_4)$ into independent sets.}
\label{fig:P4IndSets}
\end{figure}

\subsection{Counting Repetitions}
Say we have a partition $\pi$ with $k$ blocks $B_1,B_2,\ldots,B_k$ appearing on level $n-k$ of the tree $T^{FF}_G$ (note that partitions with $k$ blocks will \emph{only} appear on this level).  The $\beta_i$ elements of each block $B_i$ represent the vertices of $G$, and so are distinguishable, but the blocks themselves are not.  Duplicates of a partition occur because independent subsets may be combined in many ways to form $\pi$, and to account for these in our approximation, we must determine out how many ways there are to refine $\pi$ to $n$ singletons using $k$ refinements.

\begin{lemma} Let $G$ be a graph, and let $T^{FF}_G$ be the tree described in Section \ref{sec:FFAlg}.  A partition $\pi$ labeling a node on level $n-k$ of the tree with blocks $B_1,B_2,\ldots,B_k$ of sizes $\beta_1=|B_1|$, $\beta_2=|B_2|$,...,$\beta_k=|B_k|$ is duplicated
\begin{equation}\label{CountRefine}
\left(\prod_{i=1}^k\frac{\beta_i!(\beta_i-1)!}{2^{\beta_i-1}}\right)\cdot\frac{n-k}{(\beta_1-1)!(\beta_2-1)!\cdots(\beta_{k-1}-1)!}
\end{equation}
times.
\end{lemma}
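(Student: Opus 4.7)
The plan is to read the duplication count as the number of distinct directed paths in $T_G^{FF}$ from the root (the all-singleton partition) to any node labeled $\pi$. Each edge of the tree records one binary merge of two current blocks, so such a path is precisely an ordered sequence of $n-k$ binary merges that transforms the singleton partition into $\pi$.

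First I would observe that every such merge sequence is automatically a legal path in $T_G^{FF}$, regardless of which edges $G$ has. The reason is that every intermediate partition along the sequence is a refinement of $\pi$, so each of its blocks sits inside some $B_i$ and is therefore independent in $G$. Consequently the problem reduces to a pure combinatorial enumeration with no graph-theoretic side constraint.

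Next I would decompose a path into $k$ subsequences, one per block $B_i$, consisting of the $\beta_i-1$ merges that assemble $B_i$ from its elements. The key ingredient is the classical coalescent count for a single block of size $\beta$: the number of ordered binary merge histories that coalesce $\beta$ singletons into one set is
\[
\prod_{j=2}^{\beta}\binom{j}{2} \;=\; \frac{\beta!\,(\beta-1)!}{2^{\beta-1}},
\]
since at each of the $\beta-1$ stages with $j$ current parts one picks an unordered pair to merge. Multiplying across the $k$ blocks yields the first factor of (\ref{CountRefine}).

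Finally I would account for interleaving the $k$ block-subsequences into one global sequence of length $n-k$: given the internal ordering within each block, the number of interleavings is the multinomial coefficient $\binom{n-k}{\beta_1-1,\ldots,\beta_k-1}$, which (using $\sum_i(\beta_i-1)=n-k$) rearranges into the remaining factor of (\ref{CountRefine}). The step I expect to be the main obstacle is verifying the underlying bijection \emph{(per-block history sequence, interleaving pattern)} $\longleftrightarrow$ \emph{root-to-$\pi$ path}: uniqueness of the decomposition rests on the fact that, because each intermediate partition refines $\pi$, every merge acts entirely within one $B_i$, and that $B_i$ can be read off canonically from the pair being merged. Once that bijection is pinned down, the arithmetic collapses to the formula in (\ref{CountRefine}).
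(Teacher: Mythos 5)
Your proof takes essentially the same route as the paper's: both decompose a root-to-$\pi$ path into $k$ per-block coalescent histories, count each block's histories as $\beta_i!(\beta_i-1)!/2^{\beta_i-1}$ (the paper cites Erd\H{o}s--Guy--Moon where you compute $\prod_{j=2}^{\beta}\binom{j}{2}$ directly), and then interleave the $k$ subsequences via a multinomial coefficient. The only difference worth noting is that your $\binom{n-k}{\beta_1-1,\ldots,\beta_k-1}$ is the correctly written form of the paper's second factor, whose printed version omits the factorial on $n-k$ and the $(\beta_k-1)!$ in the denominator.
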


\begin{proof}
Say we have a partition $\pi$ with $k$ blocks $B_1,B_2,\ldots,B_k$ appearing on level $n-k$ of the tree $T^{FF}_G$.  We know that the number of ways to transform $n$ distinguishable objects into $n$ singletons with a sequence of $n-1$ refinements is $\frac{n!(n-1)!}{2^{n-1}}$ \cite{ErdosGuyMoon}.  Thus, if we consider one block $B_i$, we have $\frac{\beta_i!(\beta_i-1)!}{2^{\beta_i-1}}$ possible paths from the singletons in $B_i$ to $B_i$.  The product in parentheses in (\ref{CountRefine}) is the number of ways to choose paths for all $k$ blocks.  The right hand side of (\ref{CountRefine}) then counts the number of ways to merge the steps of the $k$ paths; this is a simple multichoose evaluation. A path to a block with $\beta_i$ elements has by definition $\beta_i-1$ steps, so the number of ways to order the all the steps of a particular selection of paths is:
\[\genfrac{(}{)}{0pt}{}{\sum_{i=1}^k (\beta_i-1)}{(\beta_1-1),(\beta_2-1),\ldots,(\beta_k-1)} = \frac{(\beta_1-1)+(\beta_2-1)+\cdots +(\beta_k-1)}{(\beta_1-1)!(\beta_2-1)!\cdots(\beta_{k-1}-1)!}.\]

\noindent Notice that $\sum_{i=1}^k\beta_i=n$.  Thus, the total number of distinct paths in $T_G^{FF}$ from $v_1|v_2|\cdots |v_n \rightarrow B_1|B_2|\cdots |B_k$ is precisely the value in (\ref{CountRefine}).
\end{proof}

As the number of duplicates depends only on the sizes of the blocks and how many blocks there are (and not on the contents of the blocks), we will denote the number of duplicates as $F(\beta_1,\beta_2,\ldots,\beta_k)$.

\subsection{Pseudocode for the FF Algorithm}
Our algorithm starts with a partition of $n$ non-empty blocks (i.e. each vertex is in its own independent set).  At each level, we find all pairs of blocks that could be combined and still be an independent set, and record this number.  One pair is selected uniformly at random, and the chosen blocks are merged.  This is repeated until we cannot merge any more blocks while maintaining the independent set condition (again, the number of levels in a path can vary).  This procedure is repeated many times, and the results of each sample are averaged.  Zero values (e.g. when one path is shorter than another) \emph{are} included in the average.  Then, the number of independent vertex partitions with $n-i$ blocks is approximated as
\[p_{n-i} \approx \frac{c_0c_1c_2\cdots c_i}{F(\beta_1,\beta_2,\cdots,\beta_{n-i})},\]

\noindent where $c_0=1$ and $c_1$ is the number of children of the root, $c_2$ the number of children of the child selected in the first level, and so on.  We summarize this as pseudocode in Algorithm \ref{alg:FFApprox}.

\begin{algorithm}[t]
\SetAlgoNoLine
\KwIn{A graph $G=(V(G),E(G))$, stored as the unsigned adjacency matrix.}
\KwOut{An approximation of the coefficients of $P(G,x)$.}
\For{$j=1,\ldots,n$}{
  Set the $j$th block $B_j$ equal to the $j$th vertex $v_j$\;
}
$i$ = 1\;
Determine $D$, the set of pairs $(B_r,B_s)$, $r<s$, such that $B_r\cup B_s$ is independent in $G$\;
$c_i=|D|$; $p_n= 1$\;
\While{$c_i\neq 0$}{
  Choose a pair $(B_r,B_s)\in D$ uniformly at random\;
  $B_r = B_r\cup B_s$, remove $B_s$\;
  $p_{n-i} = \frac{c_0c_1c_2\cdots c_i}{F(\beta_1,\beta_2,\cdots,\beta_{n-i-1})}$\;
  $i = i+1$\;
  Determine $D$, the set of pairs $(B_r,B_s)$, $r<s$, such that $B_r\cup B_s$ is independent in $G$\;
  $c_i = |D|$\;
}

\caption{FF Algorithm to Approximate the Chromatic Polynomial}
\label{alg:FFApprox}
\end{algorithm}

As before, the central question is how we determine $D$.  This is far simpler than with broken circuits.  We encode the blocks as the columns (and rows) of an adjacency matrix $\BBB$, where initially, $\BBB$ is just the (unsigned) adjacency matrix of $G$.  Merging two blocks $B_r$ and $B_s$ is permitted (i.e. $B_r\cup B_s$ is independent) when $[\BBB]_{r,s}= [\BBB]_{s,r}=0$.  In terms of the graph, this means there are no edges connecting any of the vertices in $B_r$ with any of the vertices in $B_s$.  Thus, the number of possible pairs is equal to the number of zeros above (or below: the matrix is symmetric) the diagonal of $\BBB$.  When we merge two blocks $B_r$ and $B_s$, this corresponds to adding column $s$ to column $r$, and row $s$ to row $r$, then deleting row and column $s$ (or vice-versa; for simplicity, we let $r$ be the block of smaller index).  The while loop consists of repeating this process until there are no more zeros above the diagonal in the latest updated $\BBB$.

\section{Analysis and Experimental Results }\label{sec:Results}
\subsection{Implementation}
We implemented our algorithms in both MatLab and C\footnote{Certain commercial equipment, instruments, or materials are identified in this paper to foster understanding. Such identification does not imply recommendation or endorsement by the National Institute of Standards and Technology, nor does it imply that the materials or equipment identified are necessarily the best available for the purpose.}.  The C implementation allows for the analysis of graphs of dramatically larger size and order than previously possible.  The largest graph (that we could find and may report coefficients for) with no explicit formula for $P(G,x)$ for which the chromatic polynomial has been computed is the truncated icosahedron, with $60$ vertices and $90$ edges \cite{HaggardMathies2}.  Using our algorithms, we have computed approximations of graphs with up to $500$ vertices and $60000$ edges, and larger are possible. 

Larger graphs do introduce the complication of storing extremely big numbers.  For instance, the coefficient of $x^{50}$ of $P(C_{100},x)$, where $C_{100}$ is the cycle graph on $100$ vertices, is about $10^{29}$.  To accommodate these numbers, we make use of an idea invented by Beichl, Cloteaux, and Sullivan using logarithms to accurately update the running average of the samples and compute the mean and variance; for specific details, see \cite[Sec. 4]{BeichlCloteaux}.

\subsection{Complexity of the BC Algorithm}\label{sec:BCcomplexity}
Ordering the edges of $G$ as described in Section \ref{sec:EdgeOrder} requires $O(n+m^2)$ to order the vertices \cite{LinApproximate} and $O(m^2)$ to sort the edges with respect to that order.  Thus, preprocessing requires $O(n+m^2)$.  The BC algorithm is based on Kruskal's spanning tree algorithm, with the extra condition of avoiding broken circuits.  We use Tarjan's implementations \cite{TarjanEfficiency,Tarjan} of \texttt{find} and \texttt{union}, which together have complexity $O(\alpha(n))$, where $\alpha$ is the inverse Ackermann function \cite{Ackermann}.

To build $D$ at each level, we test (at most) $m$ edges to see if they could be added to the tree without introducing any broken circuits.  To do this, we pretend to add each edge $e$ to the current subgraph-- $O(\alpha(n))$ -- then check to see if any edge $e'<e$ creates a circuit -- $O(1)$.  If so, we find the cycle and check if $e'$ is the smallest edge in the cycle -- $O(2m+n)$.  We repeat the latter two operations at most $\frac{m(m-1)}{2}$ times while building $D$.  After building $D$, we pick from it an edge at random -- $O(m)$ -- and add it to the NBC tree -- $O(\alpha(n))$.  A spanning tree has $n-1$ edges, thus we repeat the above $n-1$ times.  Therefore, for one sample, the entire BC algorithm has complexity $O(nm^3+n^2m^2)$, and each sample following the first would have complexity $O(nm^3+n^2m^2)$.  Furthermore, since we assume our graph is connected, $n=O(m)$, and we may simplify our complexity to $O(m^4)$.  As the number of edges just decreases by one in the improved version of BC, this does not change the complexity of the algorithm.

\subsection{Complexity of the FF Algorithm}
The FF algorithm requires no preprocessing outside of populating the adjacency matrix -- $O(m)$.  The first section of Algorithm \ref{alg:FFApprox} requires $O(n^2)$ time to assign vertices to the blocks and determine $D$ (i.e. to count the number of zeros in the adjacency matrix).

In the worst-case scenario, the while loop will be repeated $(n-1)$ times.  At iteration $k$, the number of zeros is bounded above by $\frac{1}{2}(n-k)(n-k+1)$.  To determine this set, and then to pick randomly from it for all $(n-1)$ iterations, we require $O(n^3)$ time.  Then, to determine $p_{n-i}$ each time, we capitalize on the fact that only two blocks change.  By using the value of $p_{n-i+1}$ to compute $p_{n-i}$, each iteration requires only $O(n)$, thus $O(n^2)$ for the entire loop.  We therefore have a complexity of $O(m+n^3)$ for the first sample, and a complexity of $O(n^3)$ for each sample of the algorithm after the first.

This complexity is significantly better than that of the BC algorithm, and in practice, it finishes in far less time.  However, this algorithm is less useful than the BC algorithm, for several reasons.  First, as the tree $T_G^{FF}$ is less uniform (maximal paths are different lengths, for instance), we have much larger variance, and must take more samples.  Moreover, there is no way to alter the uniformity of the search tree, as in the BC algorithm.  Second, as partitions corresponding to proper $\chi(G)$-colorings can have low probability, we may never select such a path, even after many samples.  Third, transitioning between the falling factorial and broken circuit expansions of $P(G,x)$ requires multiplication by a large matrix of Stirling numbers: adding, subtracting, and multiplying large numbers is computationally difficult, and often results in inaccuracies.  We therefore restrict ourselves to the BC Algorithm in the following analysis.

\subsection{No FPRAS for $P(G,x)$}
It is known \cite{NoFPRAS} that no fully polynomial-time randomized approximation scheme (FPRAS) exists for the chromatic polynomial, unless $NP=RP$.  That is, if there existed polynomial-time approximation scheme that could give the coefficients of $P(G,x)$ to within a certain probabilistic error-bound, we would be able to decide if a graph was $k$-colorable, for any $k$, in polynomial time -- a problem known to be NP-hard (except for $k=0,1,2$).  Our complexity sections might appear to contradict the fact that no FPRAS exists for $P(G,x)$, but it is important to keep in mind that these are complexities for a \emph{single} sample -- in order to get within a certain error bound, we might have to take an exponential number of samples.

However, we will show in the next sections that despite these limitations our algorithm still produces a reasonable estimate in a short amount time, as judged by the convergence of the average of the coefficients, as well as comparison in the case of known graph polynomials.

\subsection{Run-Time}

In Figure \ref{fig:TimePlot}, we show the time in seconds required to take ten samples of ER graphs of sizes $10\leq |V(G)| \leq 100$ using the BC algorithm.  We used an empirical method based on the convergence of the running averages of the coefficients to decide an appropriate number of samples.  In particular, we take a large number of samples, and if the running average does not vary more than about one percent of the mean for that coefficient, we stop.  If not, more samples are taken.  To illustrate this idea, in Figure \ref{fig:Convergence1}, we show the running averages for each coefficient over the course of $10000$ samples for an ER graph of order $10$ and size $24$, for the BC algorithm.  As the first two coefficients have no variance ($b_0=1$ and $b_1=|E(G)|$, for every sample), we do not include the convergence graphs for $x^{10}$ and $x^9$.  As the expected value of the algorithm is the precise value, it is reasonable to assume that when the running average flattens, it is converging to the correct answer.  Naturally, larger graphs will require larger numbers of samples, especially for the coefficients with larger variance.  The benefit of this algorithm, however, is that each sample is independent.  If an insufficient number of samples has been taken, we can simply take more, and include these in the average.  Again, the BC algorithm lends itself perfectly to parallelization: each sample is independent, so we may run as many copies of the same algorithm in parallel as we wish.

Timme et al.'s innovative techniques \cite{TimmeChrom} have allowed scientists to consider the chromatic polynomials of certain large graphs.  In this paper, the chromatic polynomial of the $4\times 4\times 4$ grid graph with $64$ vertices and $144$ edges was reported to be computed exactly in $11$ hours on a single Linux machine with an Intel Pentium 4, 2.8GHz-32 bit processor.  Our BC algorithm took $142$ seconds for $10^5$ (successive) samples on a single Linux machine with an Intel Xeon Quad-Core 3.4GHz processor.  The average relative error -- that is, the difference between the true value and the approximate value, divided by the true value -- for the coefficients was 0.0062.

\begin{center}
\begin{figure}
\centering
\begin{tabular}{|cc|}
\hline
\includegraphics[width=.4\textwidth]{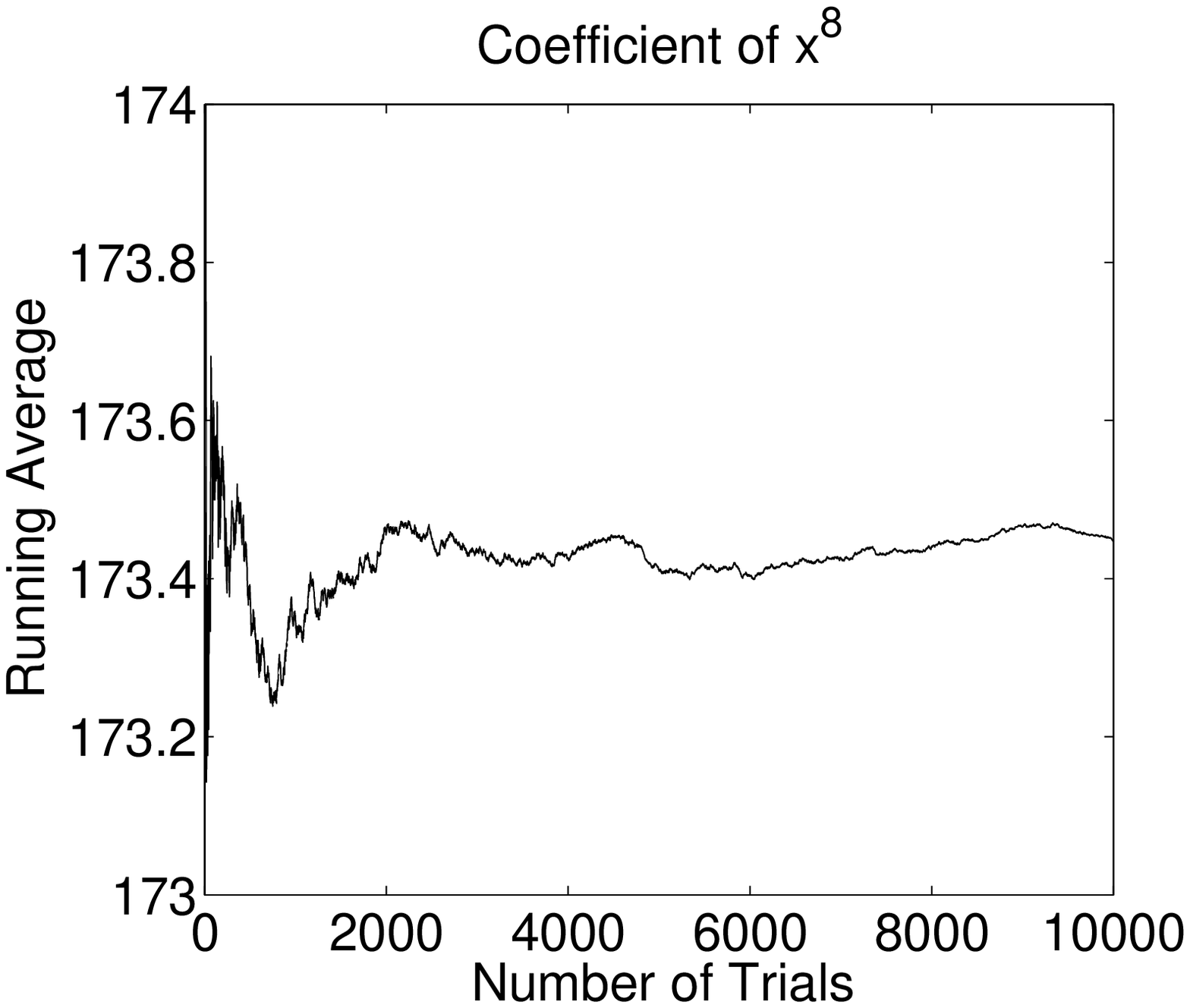} & \includegraphics[width=.4\textwidth]{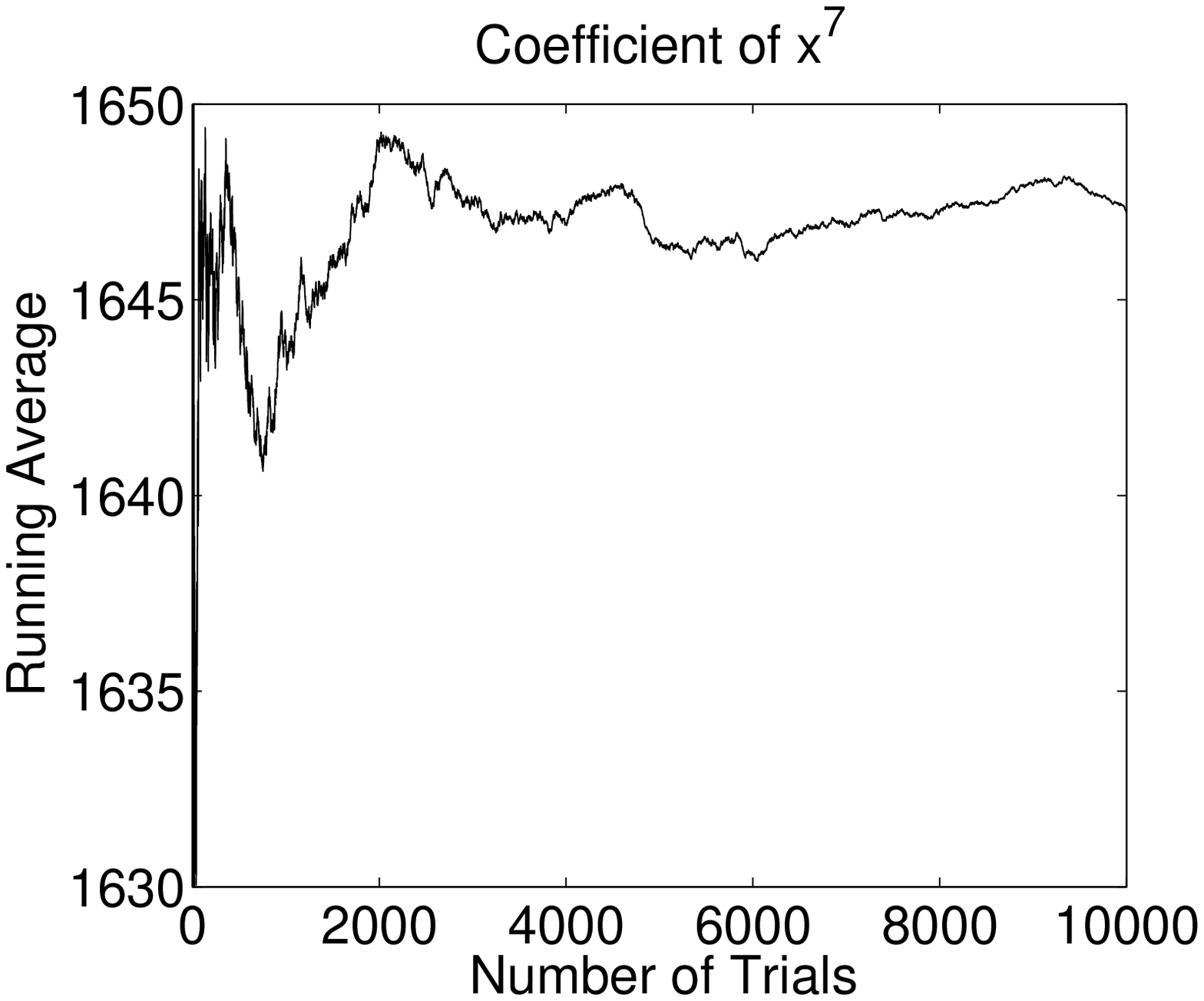}\\
\includegraphics[width=.4\textwidth]{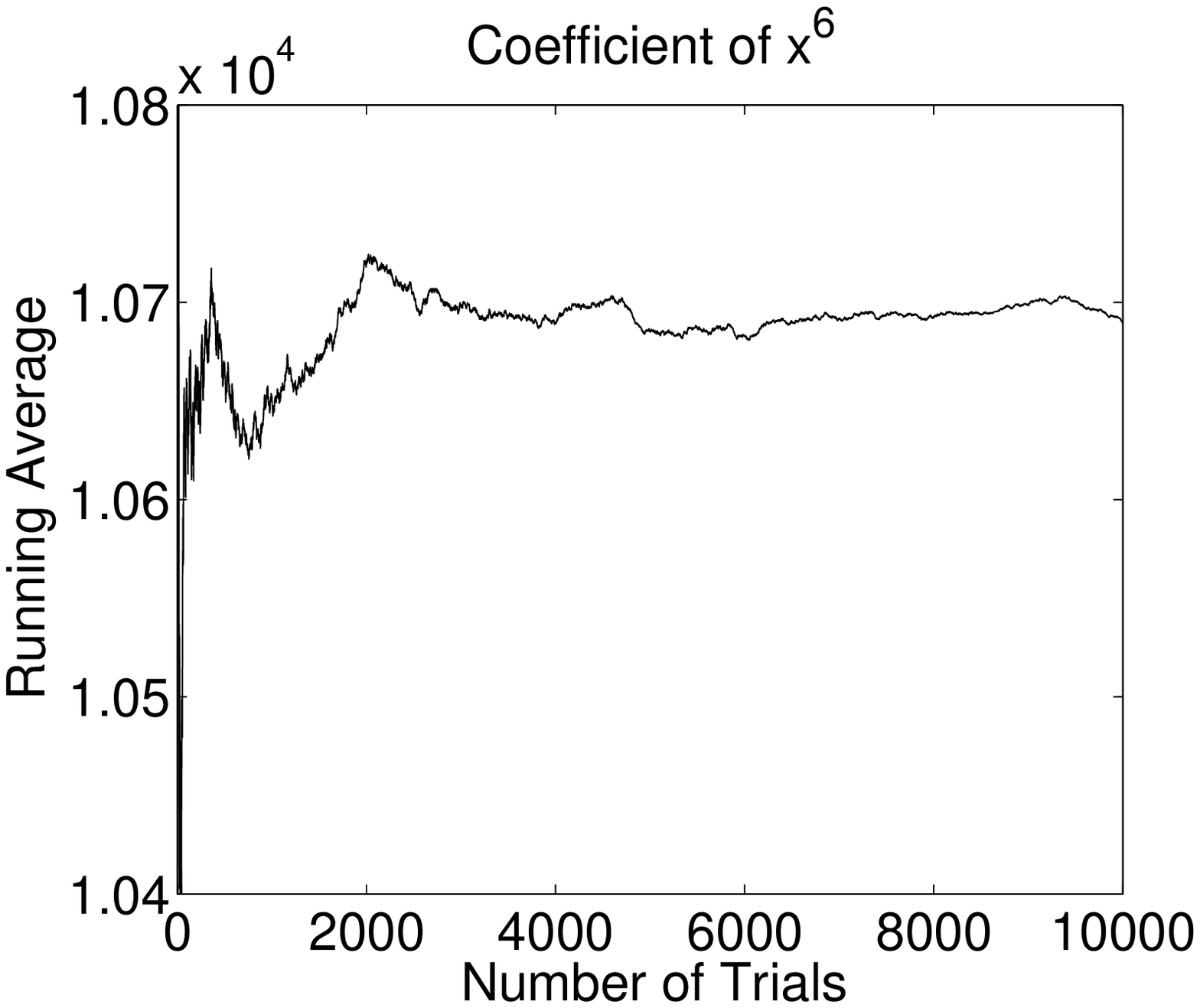} & \includegraphics[width=.4\textwidth]{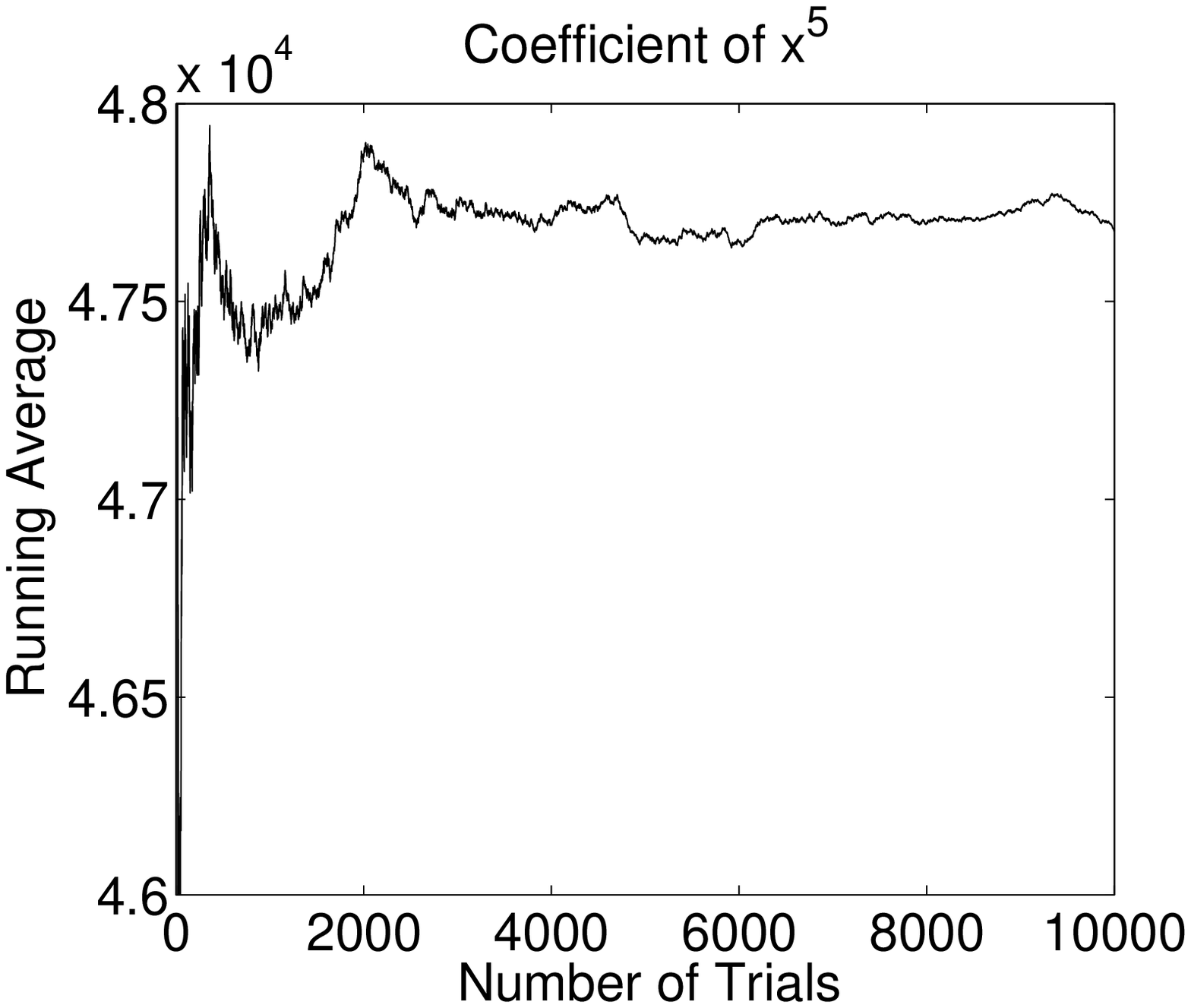}\\
\includegraphics[width=.4\textwidth]{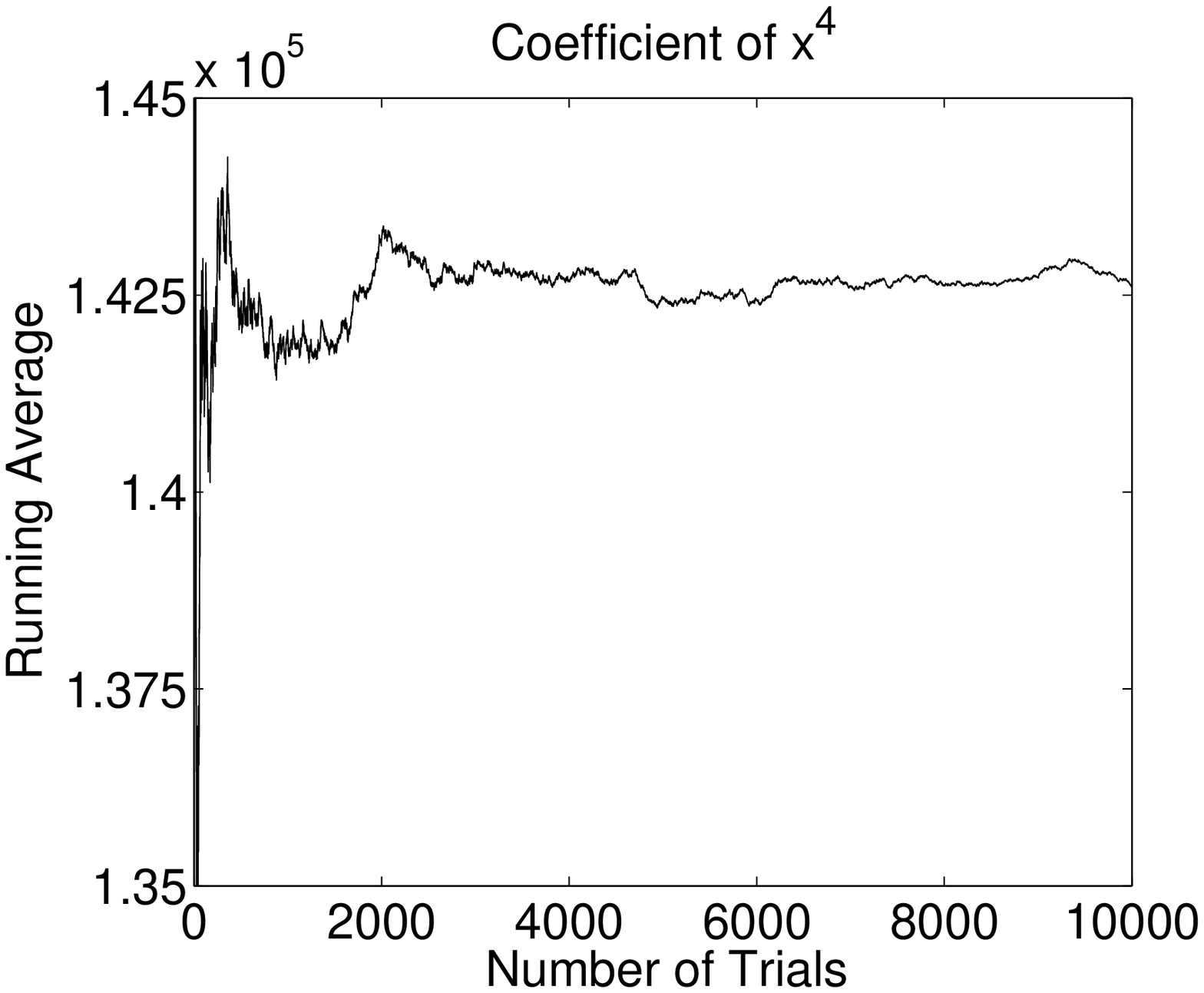} & \includegraphics[width=.4\textwidth]{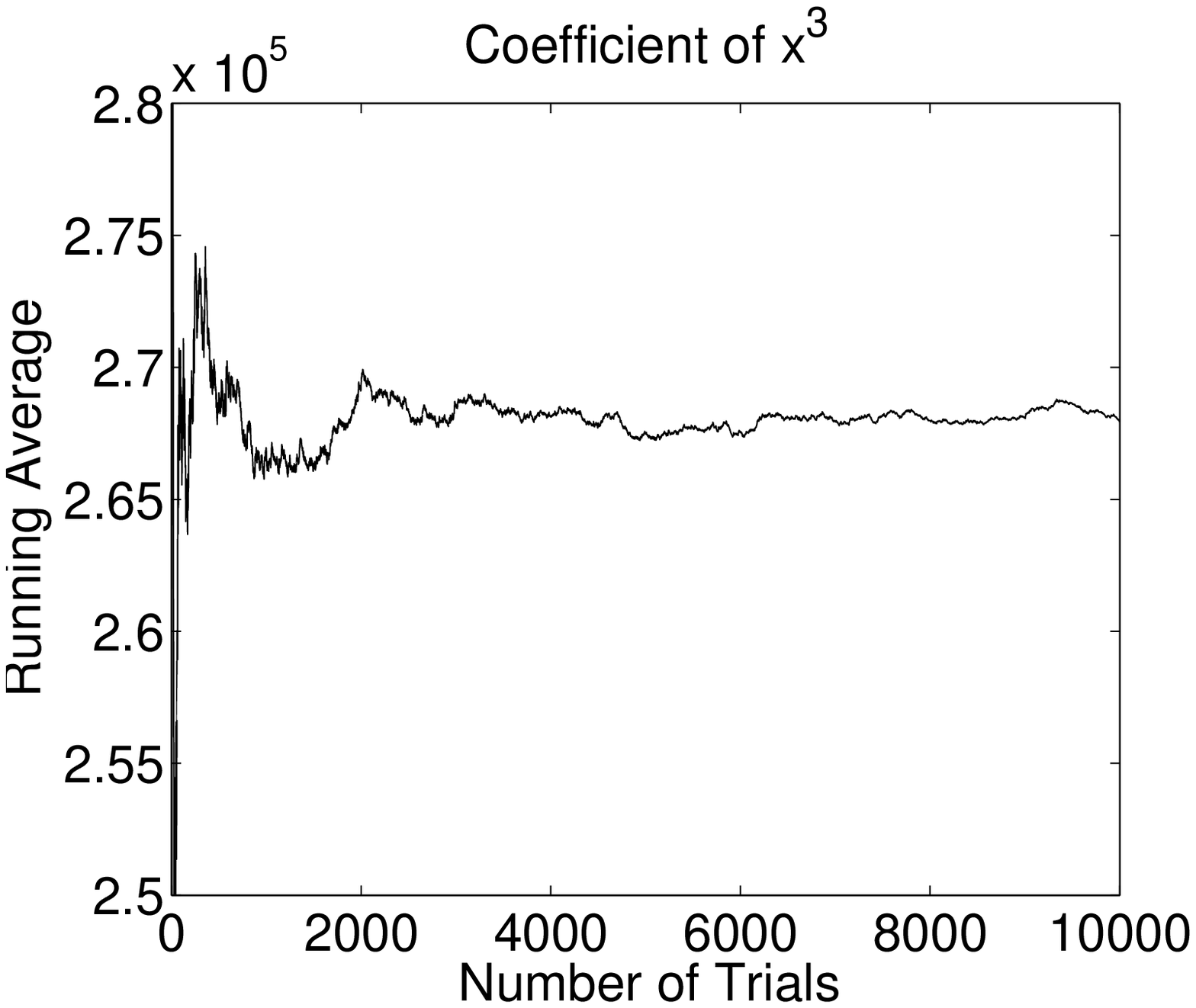}\\
\includegraphics[width=.4\textwidth]{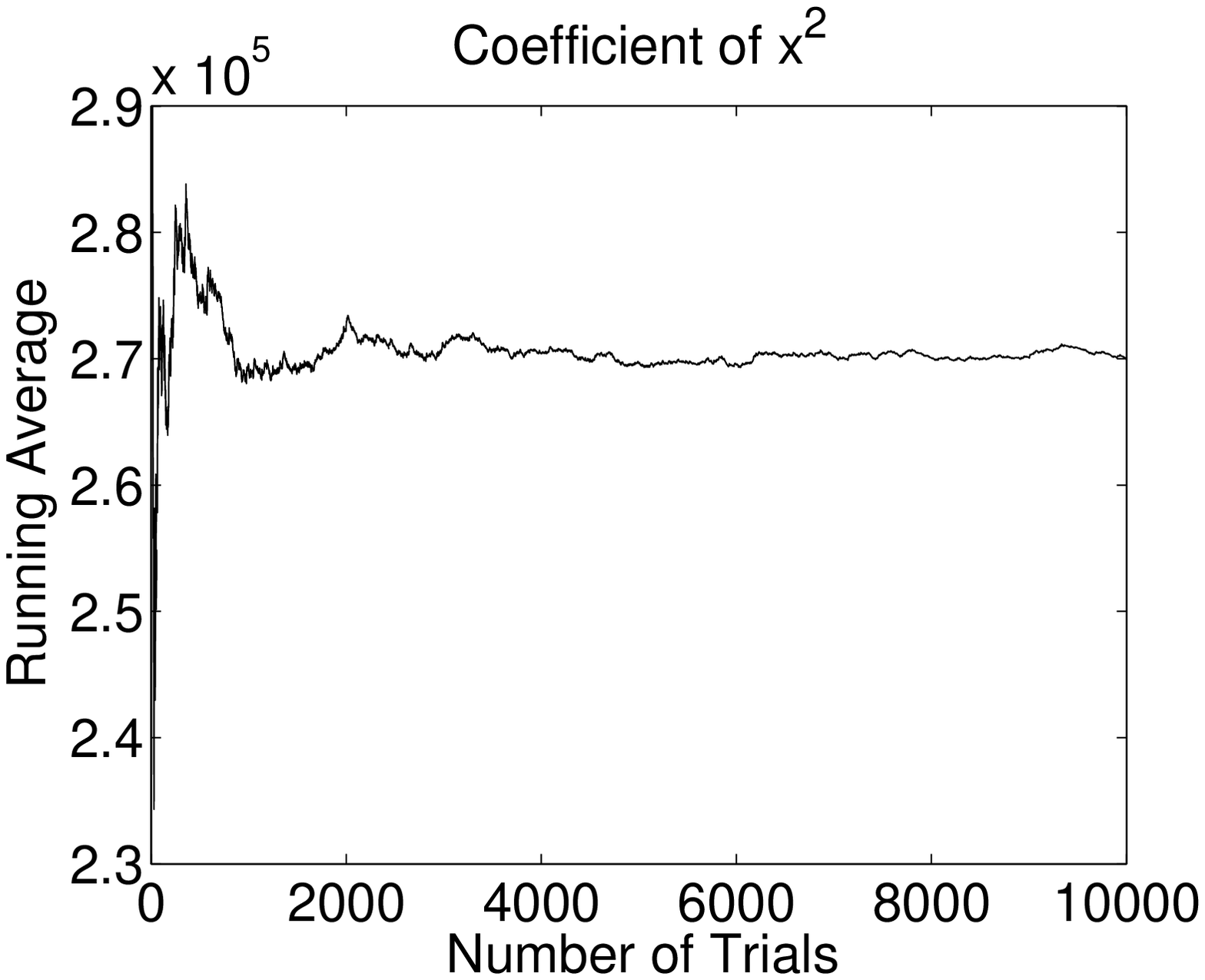} & \includegraphics[width=.4\textwidth]{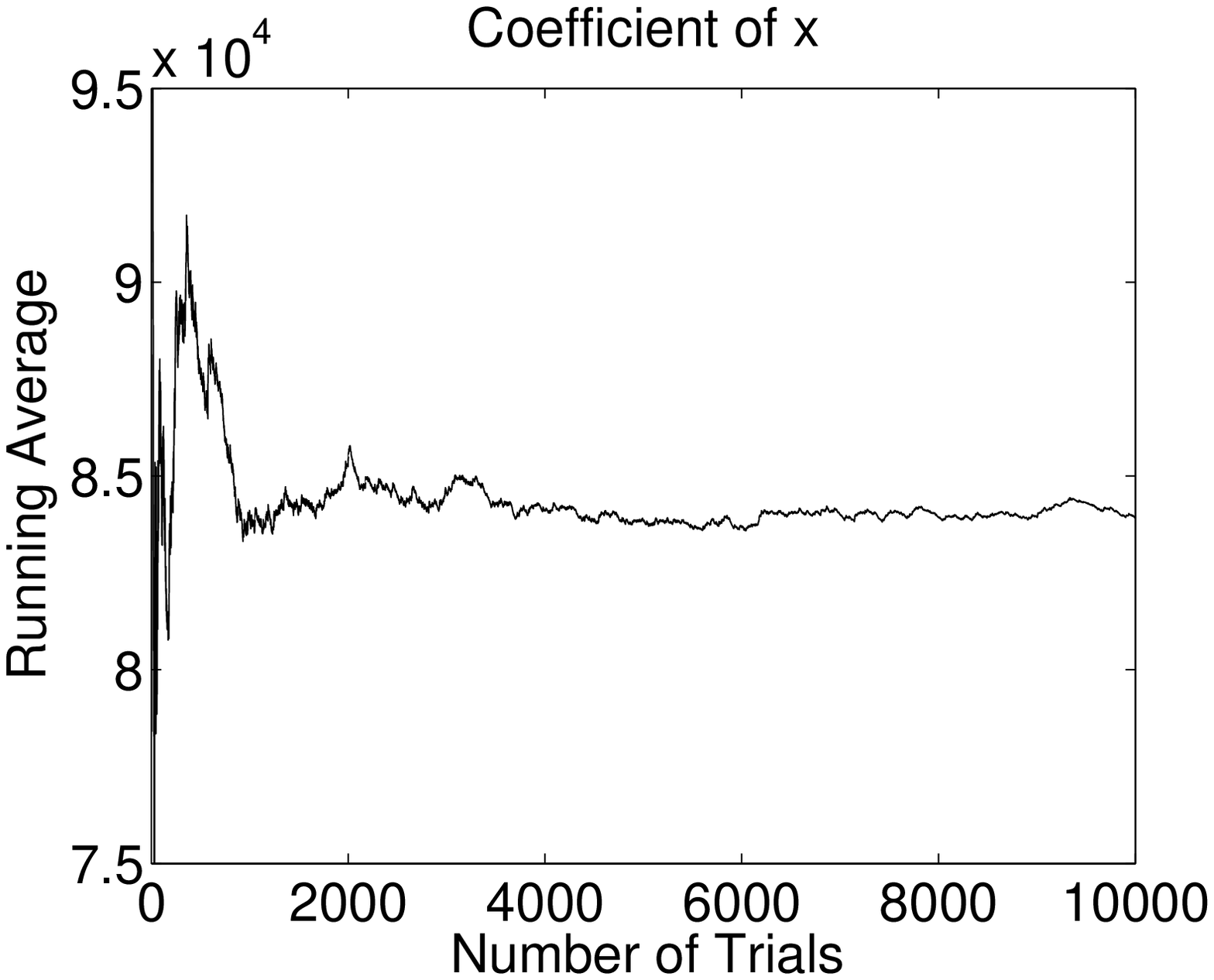} \\
\hline
\end{tabular}
\caption{Convergence graphs for the coefficients of $P(G,x)$, where $G$ is an ER random graph with 10 vertices and 24 edges.  The graphs plot the average coefficient value as a function of the number of trials.  $x^{10}$ and $x^9$ have no variance, and are thus excluded.  Total time for $10^5$ samples was 660 ms.}
\label{fig:Convergence1}
\end{figure}
\end{center}

\begin{figure}%[h!]
\centering
\begin{tabular}{|l||l|l|l|l|l|}
\hline
\small{\textbf{Graph order ($|V(G)|$):}} & 20 & 30 & 40 & 50 &  60\\
\hline
\small{\textbf{Time (in seconds)}}&  &  &  & &  \\
\small{\textbf{for $\mathbf{10}$ samples:}}& 0.006 & 0.080 & 0.352 & 0.994 & 2.630 \\
\hline
\hline
\small{\textbf{Graph order ($|V(G)|$):}} & 70& 80 & 90 & 100 & \\
\hline
\small{\textbf{Time (in seconds)}}&  &  &  & & \\
\small{\textbf{for $\mathbf{10}$ samples:}} & 5.674 & 10.768 & 20.158 & 35.850 & \\
\hline
\end{tabular}
\caption{Time to take ten samples of graphs of orders $20$, $30$, $40$, $50$, $60$, $70$, $80$, $90$, and $100$.  Times are the average over five graphs of the same order.}
\label{fig:TimePlot}
\end{figure}

\subsection{Variance}\label{sec:Variance}

We give the average relative variance for the BC algorithm for graphs of orders $|V(G)|=10$, $20$, $30$, $40$, $50$, $60$, $70$, $80$, $90$, and $100$ in Figure \ref{tab:RelVari}.  To get the data for these numbers, we took ten graphs of each order, approximated the coefficients, and found the variance for each coefficient for each graph.  We then averaged the variance over all the coefficients for all the graphs of a particular order.  The variance of our samples could be quite large; large variance is inherent in the Knuth method, as low probability events can occur.  The variances in Figure \ref{tab:RelVari} are with respect to the perfect elimination ordering method.

\begin{figure}
\centering
\begin{tabular}{|l||l|l|l|}
\hline
\small{\textbf{Graph order ($|V(G)|$):}} & 10 & 20 & 30 \\
\hline
%\hline
\small{\textbf{Relative variance (BC algorithm):}} & 0.1055 & 0.4753 & 2.0243\\
\hline
\hline
\small{\textbf{Graph order ($|V(G)|$):}} & 40 & 50 & 60 \\
\hline
\small{\textbf{Relative variance (BC algorithm):}} & 8.8041 &38.2242& 39.0662\\
\hline
\hline
\small{\textbf{Graph order ($|V(G)|$):}} & 70 & 80 & 90\\
\hline
%\hline
\small{\textbf{Relative variance (BC algorithm):}} & 78.0110 & 77.7643 & 177.5497 \\
\hline
\hline
\small{\textbf{Graph order ($|V(G)|$):}} & 100 &  & \\
\hline
\small{\textbf{Relative variance (BC algorithm):}}& 309.9915 & & \\
\hline
\end{tabular}
% 0.1055    0.2297    0.4753    1.0721    2.0243    4.4546    8.8041   14.3435   38.2242   44.5896   39.0662   50.3207 (10 - 65) relvari
\caption{Relative variance for the approximate coefficients given by the BC algorithm for ER random graphs of orders 10, 20, 30, 40, 50, 60, 70, 80, 90, and 100.  Values are the averages over ten graphs of each order.}
\label{tab:RelVari}
\end{figure}

\subsection{Relative Evaluation and Coefficient Error}\label{sec:RelError}
%Add sentence about getting \chi_G from maple, that it maxes out at ~16,17.
  We implemented Lin's algorithm \cite{LinApproximate} as a means of comparison.  Though not a Monte Carlo method, it appears to be the only previous approximation algorithm for $P(G,x)$.  As a measure of accuracy, we found the relative evaluation errors for Lin's and the BC algorithm for several different graphs.  Here, the relative evaluation error of an approximate polynomial $P_{\mbox{approx}}(x)$ with respect to the true polynomial $P_{\mbox{\tiny{true}}}(x)$ is:
\begin{equation}\label{eq:RelErr}
\frac{|P_{\mbox{\tiny{true}}}(x)-P_{\mbox{\tiny{approx}}}(x)|}{P_{\mbox{\tiny{true}}}}.
\end{equation}
\noindent Notice that as $x\rightarrow\infty$, the relative evaluation error for both algorithms will approach zero: the leading term $x^{|V(G)|}$ dominates, and is the same for the true polynomial and both approximations.  The random graphs in Figure \ref{tab:AverageEvalError} are of order at most fifteen because we used Maple to compute the polynomials, and these were the largest that finished in a reasonable amount of time (or finished at all).  We are able to compare results for the wheel graph as it is one of several classes of graphs with known chromatic polynomials.  For a wheel graph $W_n$ of order $n$, $P(W_n,x) = x[(x-2)^{n-1}+(-1)^{n+1}(x-2)]$.  The relative evaluation errors for several sizes of wheel graphs are given in Figure \ref{tab:AveEvalWG}.  We further located the precise coefficients for the graph of the truncated icosahedron in \cite{HaggardMathies2}.  We analyze the relative evaluation errors for different values of $x$ in Figure \ref{tab:AveEvalTrunk}.

\begin{figure}
\centering
\begin{tabular}{|l|l||l|l|l|}
\hline
&  & \small{Eval. error}& \small{Eval. error} & \small{Eval. error}\\
\small{$\mathbf{|V(G)|}$}& \small{\textbf{Algorithm}} &\small{at $x=6$} &\small{at $x=10$} &\small{at $x=15$}\\
\hline
\hline
%8 & Lin & 0.0557 & 0.0717 & 0.0532  \\ 
%  & BC & 0.0105 & 0.0011 & 0.0003 \\
%\hline
9 &Lin & 0.1040 & 0.1070 & 0.0770  \\ 
& BC & 0.017 & 0.002 & 0.0004 \\
\hline
10& Lin & 0.1079 & 0.2347 & 0.1745 \\ 
 & BC & 0.2827 & 0.0067 & 0.0013\\
\hline
11& Lin & 0.2622 & 0.2643 & 0.2252\\ 
& BC & 3.5823 & 0.0197 & 0.0018\\
\hline
12 & Lin & 0.5420 & 0.3130 & 0.3020\\
 & BC & 15.5960 & 0.0400 & 0.0020\\ 
\hline
13 &Lin & 0.6267 & 0.2941 & 0.3231\\ 
& BC & 40.0940 & 0.1050 & 0.0045 \\ 
\hline
14 & Lin & 3.2729 & 0.2394 & 0.3735\\ 
& BC & 1.4329e+04 & 1.0672 & 0.0067\\ 
\hline
15 & Lin & 51.5360 & 0.5410 & 0.4610\\ 
& BC & 1.294e+04 & 0.6680 & 0.0150\\ 
\hline
\hline
&  & \small{Eval. error}& \small{Eval. error} & \small{Eval. error}\\
\small{$\mathbf{|V(G)|}$}& \small{\textbf{Algorithm}} &\small{at $x=20$} &\small{at $x=25$} & \small{at $x=30$}\\
\hline
\hline
%8 & Lin & 0.0557 & 0.0717 & 0.0532 & 0.0411 & 0.0333 \\ 
%  & BC & 0.0105 & 0.0011 & 0.0003 & 0.0001 & $<10^{-4}$\\
%\hline
9 &Lin  & 0.0590 & 0.0470 &  0.0400\\ 
& BC  & 0.0002 & 0.0001 & 0.0001\\
\hline
10& Lin  & 0.1332 & 0.1069 &  0.0890\\ 
 & BC  & 0.0005 & 0.0003 & 0.0002\\
\hline
11& Lin & 0.1765 & 0.1430 &  0.1197\\ 
& BC  & 0.0006 & 0.0003 & 0.0002\\
\hline
12 & Lin  & 0.2400 & 0.1950 & 0.1630\\
 & BC  & 0.0005 & 0.0002 & 0.0001\\ 
\hline
13 &Lin  & 0.2631 & 0.2156 &  0.1813\\ 
& BC  & 0.0007 & 0.0002 & 0.0001\\ 
\hline
14 & Lin  & 0.3256 & 0.2704 & 0.2278 \\ 
& BC  & 0.0020 & 0.0008 & 0.0004\\ 
\hline
15 & Lin  & 0.4680 & 0.4010 & 0.3410\\ 
& BC  & 0.0040 & 0.0010 & 0.0007\\ 
\hline
\end{tabular}
\caption{Relative evaluation error (as given in Equation (\ref{eq:RelErr})) of the approximate chromatic polynomials of ER random graphs, for Lin's algorithm and the BC algorithm.}
\label{tab:AverageEvalError}
\end{figure}

\begin{figure}
\centering
\begin{tabular}{|l|l||l|l|l|}
\hline
& & \small{Eval. error} & \small{Eval. error} & \small{Eval. error}\\
\small{\textbf{Graph}}& \small{\textbf{Algorithm}} & \small{at $x=5$} & \small{at $x=10$} & \small{at $x=15$}\\
\hline
\hline
$W_{10}$& Lin & 0.0841 &   0.0530 &   0.0354 \\
& BC & 0.0287  &  0.0005  &  0.0001  \\
\hline
$W_{50}$& Lin & 3.8967 &   0.0505  &  0.0341  \\
& BC & 1.5937e+11  &  928.0531 &   1.1941 \\
\hline
$W_{80}$& Lin & 4.8640 &   0.0501 &   0.0340 \\
& BC & 3.3089e+10 &  1.4269e+07& 199.1825\\
\hline
\hline
& & \small{Eval. error} & \small{Eval. error} & \small{Eval. error}\\
\small{\textbf{Graph}}& \small{\textbf{Algorithm}} & \small{at $x=20$} & \small{at $x=25$} & \small{at $x=30$}\\
\hline
\hline
$W_{10}$& Lin  &   0.0264  &  0.0210 & 0.0175\\
& BC  &  $<10^{-4}$&  $<10^{-4}$ & $<10^{-4}$\\
\hline
$W_{50}$& Lin  &  0.0256  &  0.0204 & 0.0170\\
& BC  &  0.0455 &  0.0057 & 0.0013\\
\hline
$W_{80}$& Lin  &   0.0255  &  0.0204 & 0.0169\\
& BC & 0.9138& 0.0956 & 0.0163\\
\hline
\end{tabular}
\caption{Relative evaluation error of the approximate chromatic polynomials of the wheel graph $W_n$ with $n=10$, $50$, and $80$, for Lin's and the BC algorithms.}
\label{tab:AveEvalWG}
\end{figure}

\begin{figure}
\centering
\begin{tabular}{|l|l||l|l|l|}
\hline
& & \small{Eval. error} & \small{Eval. error} & \small{Eval. error}\\
\small{\textbf{Graph}}& \small{\textbf{Algorithm}} & \small{at $x=5$} & \small{at $x=10$} & \small{at $x=15$} \\
\hline
\hline
\small{Truncated} & \small{Lin} & 4.0888e+04 & 1.9543 & 0.6076 \\
\small{Icosahedron} & \small{BC} &6.6613e+08 & 77.2965 & 0.0097\\
\hline
\hline
& & \small{Eval. error} & \small{Eval. error} & \small{Eval. error}\\
\small{\textbf{Graph}}& \small{\textbf{Algorithm}} & \small{at $x=20$} & \small{at $x=25$} & \small{at $x=30$}\\
\hline
\hline
\small{Truncated} & \small{Lin}  & 0.7175 & 0.6762 & 0.6031\\
\small{Icosahedron} & \small{BC} & 0.0022 & 0.0005 & 0.0001\\
\hline
\end{tabular}
\caption{Relative evaluation error of the approximate chromatic polynomials of the truncated icosahedron for Lin's and the BC algorithms.}
\label{tab:AveEvalTrunk}
\end{figure}

While the relative evaluation error is crucial in determining the usability of the algorithm, the relative \emph{coefficient} error is of great interest as a tool to understand the bounds on and properties of the graph substructures they enumerate.  As Lin's algorithm computes an approximation by averaging an upper and a lower bound, the coefficients of this approximation are not necessarily extremely precise.  Therefore we just provide a comparison of the BC approximation coefficients relative to the precise coefficients for the same set of graphs (random graphs of order less than sixteen, wheel graphs, and the truncated icosahedron).  The average relative coefficient (ARC) error for the truncated icosahedron (TI) was $0.0429$; we show the relative error for each coefficient of $P(TI,x)$ in Figure \ref{tab:TICoeffError}.  For the small random graphs and selected wheel graphs, see Figures \ref{tab:AverageError} and \ref{tab:AveCoeffWG}.  Finally, note that the BC algorithm gives precise values for the coefficients of $x^{n-i}$ for $0\leq i<C-1$, where $C$ is the length of the smallest cycle.

\begin{figure}
\centering
\begin{tabular}{|l||l|l|l|l|}
\hline
\small{\textbf{Graph order ($|V(G)|$):}} & 8 & 9 & 10 & 11 \\
\hline
\small{\textbf{Average ARC Error:}} & 0.0015 & 0.0024 & 0.0028 & 0.0021\\
\hline
\hline
\small{\textbf{Graph order ($|V(G)|$):}} & 12 & 13 & 14 & 15\\
\hline
\small{\textbf{Average ARC Error:}} & 0.0029 & 0.0029 & 0.0035 & 0.0038\\
\hline
\end{tabular}
\caption{Average relative coefficient (ARC) error for the BC approximate coefficients, averaged over five ER graphs each of eight different orders.}
\label{tab:AverageError}
\end{figure}

\begin{figure}
\centering
\begin{tabular}{|l||l|l|l|l|l|}
\hline
\small{\textbf{Graph:}} & $W_{20}$ & $W_{40}$ & $W_{60}$ & $W_{80}$ & $W_{100}$\\
\hline
\small{\textbf{Average ARC Error:}}& 0.0005 & 0.0014 & 0.0013 & 0.0013 & 0.0014\\
\hline
\end{tabular}
\caption{Average relative coefficient (ARC) error for the BC approximate coefficients of $P(W_n,x)$ for wheel graphs of five different orders.}
\label{tab:AveCoeffWG}
\end{figure}

\begin{figure}
\centering
\begin{tabular}{|c|}
\hline
\includegraphics[width=.75\textwidth]{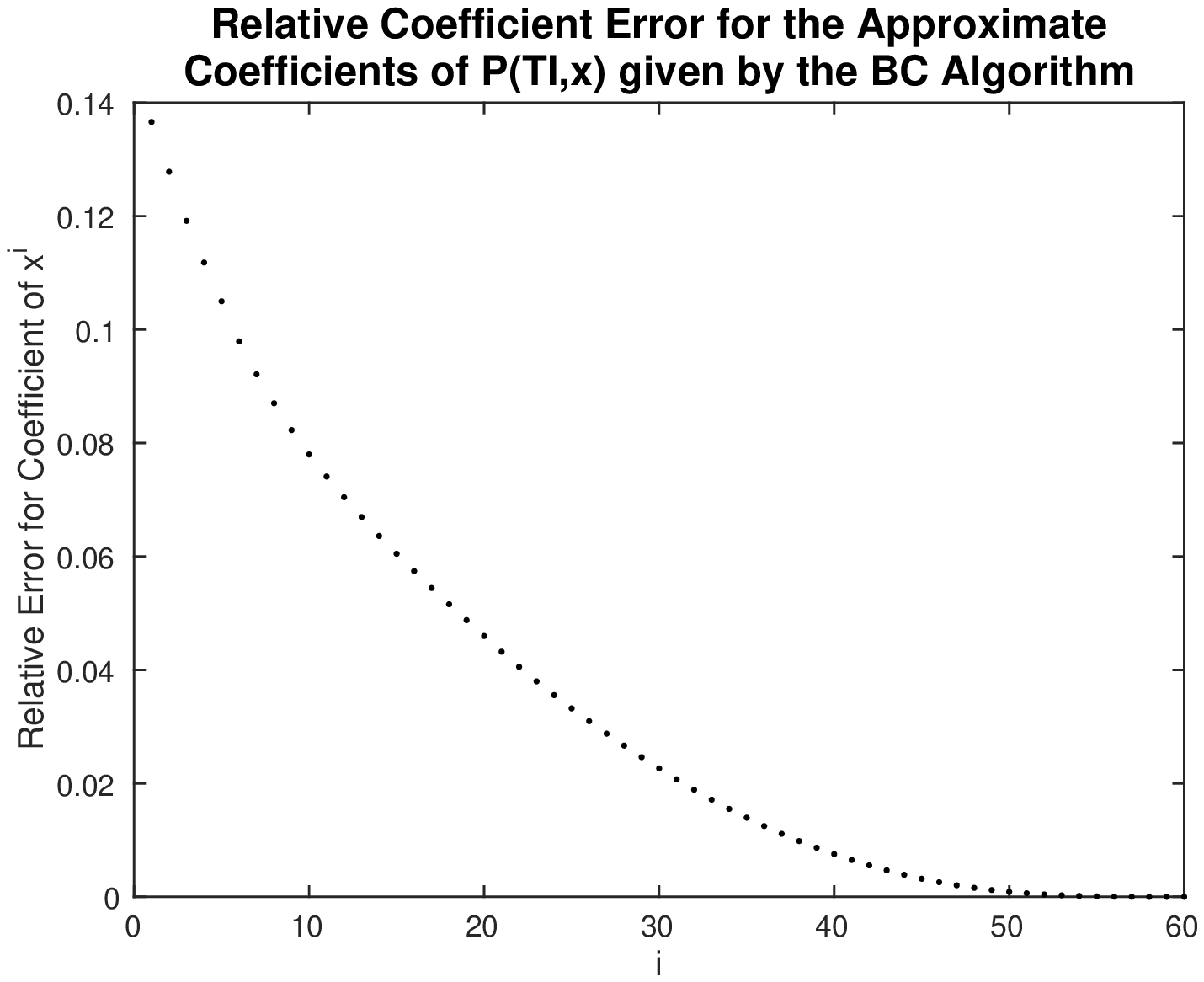}\\
\hline
\end{tabular}
\caption{Relative coefficient error for the coefficients of the chromatic polynomial of the truncated icosahedron (TI) graph, for the BC algorithm.}
\label{tab:TICoeffError}
\end{figure}

\section{Conclusions and Future Work}\label{sec:FutureWork}
We designed and implemented two approximation algorithms for the computation of the chromatic polynomial, as well as variations and improvements on the original broken circuit algorithm.  Experiments indicate that our methods are fast and have low relative error, for both evaluations and coefficients of the chromatic polynomial.  Further, the algorithms lend themselves extremely well to parallelization, allowing for further improvements in computation time.  In the long term we look to extending our methods.

\section{Acknowledgements}
The authors would like to thank Francis Sullivan and Jim Lawrence for many helpful conversations and comments, and as well as Aravind Srinivasan and Marc Timme for their time and suggestions.  The authors appreciate the comments from a variety of reviewers.

\bibliography{KemperBeichl}

\end{document}